\tikzstyle{box}=[shape=rectangle, text height=1.5ex, text depth=0.25ex, yshift=0.5mm, fill=white, draw=black, minimum height=5mm, yshift=-0.5mm, minimum width=5mm, font={\small}]
\tikzstyle{Z dot}=[inner sep=0mm, minimum size=2mm, shape=circle, draw=black, fill={rgb,255: red,216; green,248; blue,216}, tikzit fill={rgb,255: red,216; green,248; blue,216}]
\tikzstyle{Z phase dot}=[minimum size=4.75mm, font={\footnotesize}, shape=rectangle, rounded corners=1.9mm, inner sep=0.1mm, outer sep=-2mm, scale=0.8, tikzit shape=circle, draw=black, fill={rgb,255: red,216; green,248; blue,216}, tikzit draw=blue, tikzit fill={rgb,255: red,216; green,248; blue,216}]
\tikzstyle{X dot}=[Z dot, shape=circle, draw=black, fill={rgb,255: red,232; green,165; blue,165}, tikzit fill={rgb,255: red,232; green,165; blue,165}]
\tikzstyle{X phase dot}=[Z phase dot, tikzit shape=circle, tikzit fill={rgb,255: red,232; green,165; blue,165}, fill={rgb,255: red,232; green,165; blue,165}, font={\footnotesize}, tikzit draw=blue]
\tikzstyle{XD dot}=[shape=XDdot, inner sep=2pt, draw=black]
\tikzstyle{XD phase dot}=[shape=XDdotphase, minimum size=4.75mm, font={\footnotesize}, inner sep=.1mm, outer sep=0mm, scale=0.8, tikzit shape=circle, rounded corners=1.9mm, draw=black]
\tikzstyle{zn}=[shape=zn, tikzit draw=black, draw=black, inner sep=2pt]
\tikzstyle{hadamard}=[fill={rgb,255: red,140; green,220; blue,248}, draw=black, shape=rectangle, inner sep=0.6mm, minimum height=1.5mm, minimum width=1.5mm, xslant=0.5]
\tikzstyle{hz}=[hadamard, fill={rgb,255: red,216; green,248; blue,216}, shape=rectangle, tikzit fill={rgb,255: red,216; green,248; blue,216}, minimum height=2 mm, minimum width=1.25 mm, tikzit draw=black]
\tikzstyle{hx}=[hadamard, fill={rgb,255: red,232; green,165; blue,165}, shape=rectangle, tikzit fill={rgb,255: red,232; green,165; blue,165}, minimum height=2 mm, minimum width=1.25 mm, tikzit draw=black]
\tikzstyle{vertex}=[inner sep=0mm, minimum size=1mm, shape=circle, draw=black, fill=black]
\tikzstyle{vertex set}=[inner sep=0mm, minimum size=1mm, shape=circle, draw=black, fill=white, font={\footnotesize\boldmath}]
\tikzstyle{meter}=[draw, fill=white, minimum width=2em, minimum height=1.5em, rectangle, path picture={\draw ([shift={(.1,.24)}]path picture bounding box.south west) to[bend left=50] ([shift={(-.1,.24)}]path picture bounding box.south east);\draw[-{Latex[scale=0.6]}] ([shift={(0,.1)}]path picture bounding box.south) -- ([shift={(.3,-.1)}]path picture bounding box.north);}, tikzit shape=rectangle]
\tikzstyle{white dot}=[Z dot]
\tikzstyle{gray dot}=[X dot]
\tikzstyle{white phase dot}=[Z phase dot]
\tikzstyle{gray phase dot}=[X phase dot]
\tikzstyle{red ket}=[fill={rgb,255: red,232; green,165; blue,165}, draw=black, shape=isosceles triangle, tikzit fill={rgb,255: red,232; green,165; blue,165}, tikzit draw=black, inner sep=0 mm, outer sep=2 mm]
\tikzstyle{tiny none}=[none, font={\tiny}]
\tikzstyle{green ket}=[fill={rgb,255: red,216; green,248; blue,216}, draw=black, shape=isosceles triangle, tikzit fill={rgb,255: red,216; green,248; blue,216}, tikzit draw=black, inner sep=0 mm, outer sep=2 mm]
\tikzstyle{filament}=[hadamard, fill=yellow, draw=none, minimum height=0.01mm]
\tikzstyle{unit circle}=[shape=circle, minimum size=42.5 mm, fill=none, draw={rgb,255: red,223; green,223; blue,223}, tikzit draw={rgb,255: red,223; green,223; blue,223}]
\tikzstyle{new style 0}=[shape=ellipse, minimum height=10 mm, minimum width=100 mm, fill=white, draw=black]
\tikzstyle{gate}=[box, minimum height=10mm, minimum width=10mm]
\tikzstyle{2 control}=[vertex set, draw=blue, inner sep=0.5pt]
\tikzstyle{1 control}=[2 control, draw=red]
\tikzstyle{0 control}=[2 control, draw=black]
\tikzstyle{small dot}=[vertex, minimum size=1 mm, draw=black, tikzit draw=black, tikzit fill=black, tikzit shape=circle]
\tikzstyle{tallbox}=[box, minimum height=12mm]
\tikzstyle{targ}=[vertex set, minimum size=0.5mm, inner sep=-0.5mm, tikzit shape=circle, shape=circle, tikzit draw=black]
\tikzstyle{hadamardbox}=[hadamard, xslant=0]
\tikzstyle{directedarrow}=[draw={rgb,255: red,223; green,223; blue,223}, ->, tikzit draw={rgb,255: red,223; green,223; blue,223}, line width=1 pt]
\tikzstyle{simple}=[-]
\tikzstyle{hadamard edge}=[-, color={rgb,255: red,0; green,100; blue,248}, dashed, dash pattern=on 2pt off 0.7pt, tikzit draw={rgb,255: red,0; green,100; blue,248}]
\tikzstyle{brace edge}=[-, tikzit draw=blue, decorate, decoration={brace,amplitude=1mm,raise=-1mm}]
\tikzstyle{gray}=[-, draw={rgb,255: red,223; green,223; blue,223}, line width=1 pt]
\tikzstyle{arrow}=[<-, draw={rgb,255: red,128; green,128; blue,128}]
\tikzstyle{double-arrow}=[draw={rgb,255: red,128; green,128; blue,128}, <->]
\tikzstyle{dashed edge}=[-, dashed, dash pattern=on 2pt off 0.5pt, draw=black]
\tikzstyle{diredge}=[->]
\tikzstyle{double edge}=[-, double, shorten <=-1mm, shorten >=-1mm, double distance=2pt]
\tikzstyle{thin}=[-, line width=0.05mm]
\tikzstyle{thin gray}=[-, draw={rgb,255: red,223; green,223; blue,223}, line width=0.05mm]
\tikzstyle{less thin}=[-, line width=0.1mm]
\tikzstyle{dashed gray edge}=[-, dashed edge, draw={rgb,255: red,128; green,128; blue,128}]
\tikzstyle{light right directed arrow}=[->, directedarrow, draw={rgb,255: red,223; green,223; blue,223}, line width=0.2mm]
\tikzstyle{diredge0.3}=[->, line width=0.3 mm]
\tikzstyle{less thin gray}=[-, draw={rgb,255: red,223; green,223; blue,223}]
\tikzstyle{dashed thin purple}=[-, dashed, line width=0.1mm, draw={rgb,255: red,128; green,106; blue,219}]
\tikzstyle{hadamardedge}=[-, color={rgb,255: red,100; green,200; blue,248}, dashed, dash pattern=on 2pt off 0.7pt, tikzit draw={rgb,255: red,120; green,220; blue,248}]
\tikzstyle{line0.3}=[-, line width=0.3mm]
\tikzstyle{light blue line}=[-, color={rgb,255: red,100; green,200; blue,248}, tikzit draw={rgb,255: red,100; green,200; blue,248}]
\tikzstyle{pink0.2}=[-, color={rgb,255: red,248; green,100; blue,200}, line width=0.2mm, tikzit draw={rgb,255: red,248; green,100; blue,200}]
\begin{document}

\title{Constructing all qutrit controlled Clifford+\texorpdfstring{$T$}{T} gates in Clifford+\texorpdfstring{$T$}{T}}

\author{Lia Yeh\inst{1}
% \orcidID{0000-1111-2222-3333} 
\and
John van de Wetering\inst{1,2}
% \orcidID{0000-0002-5405-8959}
}
\authorrunning{L.~Yeh \& J.~van de Wetering}

\institute{University of Oxford \and
Radboud University Nijmegen}

\maketitle

\begin{abstract}
    For a number of useful quantum circuits, qudit constructions have been found which reduce resource requirements compared to the best known or best possible qubit construction.
    However, many of the necessary qutrit gates in these constructions have never been explicitly and efficiently constructed in a fault-tolerant manner.
    We show how to exactly and unitarily construct any qutrit multiple-controlled Clifford+$T$ unitary using just Clifford+$T$ gates and without using ancillae.
    The $T$-count to do so is polynomial in the number of controls $k$, scaling as $O(k^{3.585})$.
    With our results we can construct ancilla-free Clifford+$T$ implementations of multiple-controlled $T$ gates as well as all versions of the qutrit multiple-controlled Toffoli, while the analogous results for qubits are impossible.
    As an application of our results, we provide a procedure to implement any ternary classical reversible function on $n$ trits as an ancilla-free qutrit unitary using $O(3^n n^{3.585})$ $T$ gates.

    \keywords{Qutrits \and Gate Synthesis \and Clifford+T}
\end{abstract}

\section{Introduction}
\label{sec:intro}

Classical computing technology works with bits, where the state of the fundamental information unit can be in one of two states. It is then not surprising that quantum computing researchers have mostly studied \emph{qu}bits, where the fundamental unit of information can be in a superposition of two states.
However, there are several benefits we can get by working instead with qu\emph{dits}, where we work with higher-dimensional systems.
One such benefit is that many proposed physical types of qubits are actually restricted subspaces of higher-dimensional systems, where the natural dimension can be much higher. By working with qudits we can exploit the additional degrees of freedom present in the system.  
Qudit quantum processors based on ion traps~\cite{RingbauerM2021quditions} and superconducting devices~\cite{BlokM2021scrambling,YeB2018cphasephoton,YurtalanM2020Walsh-Hadamard} have already been demonstrated.
By using the otherwise wasted dimensions accessible in a qudit we increases the device's information density, which leads to advantages in for instance runtime efficiency, resource requirements, magic state distillation noise thresholds, and noise resilience in communication~\cite{WangY2020quditsreview,CozzolinoD2019quditcommunication,CampbellE2014quditmsdthresholds,PrakashS2020qutritmsd}.

For qudits to make a good foundation for a quantum computer, we need techniques to do fault-tolerant computation with them. 
A well-studied approach for fault-tolerant computation with qubits is based on the observation that many quantum error correcting codes can natively implement Clifford gates, so that we only need to realise a fault-tolerant implementation of some non-Clifford gate. A popular choice for this gate is the $T$ gate, a single-qubit gate that can be implemented by injecting its magic state into a circuit. As these magic states can be distilled to a desired level of fidelity, we can then implement approximately universal quantum computation fault-tolerantly~\cite{BravyiS2005ftqc}. 
Qudit analogues of this Clifford+$T$ gate set have been developed, so that this approach of magic state distillation and injection can be used to do fault-tolerant computation for qudits of any dimension~\cite{CampbellE2012tgatedistillation}.

In this paper we will work with qu\emph{trits}, three-dimensional systems. These are the most well-studied higher-dimensional qudit.
Qutrits have been used to reduce the circuit complexity of implementing multi-controlled qubit gates~\cite{GokhaleP2019asymptotic, KiktenkoE2020quditroutinggraph, IonicioiuR2009mctqudit, RalphT2007qudittoffolioptics, LanyonB2009multilevelinformationcarrier}.
By replacing some or all of the information carriers to be qutrits instead of qubits, the $\ket{2}$ energy level of the qutrit can be utilized to reduce resource requirements in terms of number of ancillae, entanglement complexity, gate depth and gate count, and non-Clifford gate count.
For instance, in Ref.~\cite{GokhaleP2019asymptotic} they showed how to implement an $n$-controlled (qubit) Toffoli gate in $O(\log n)$ depth, using just $O(n)$ two-qutrit gates and no ancillae. More generally, we say a qutrit circuit \emph{emulates} a qubit gate when the action on the $\{\ket{0}, \ket{1}\}$ subspace is  equal to the qubit gate. Emulating a qubit logic gate with a qutrit unitary can be more efficient than only using qubits.  This is because we can utilise the additional $\ket{2}$ state as an intermediate storage to decrease the cost of implementation.
The gates used in these constructions involve what we will call \emph{$\ket{2}$-controlled gates} --- controlled unitaries that only fire when the control qutrit is in the $\ket{2}$ state, so that this gate acts as the identity on the target if the control is in the $\ket{0}$ or $\ket{1}$ state.
In fact, many qutrit-native algorithms use $\ket{2\cdots 2}$-controlled logic gates, including various ternary adders and incrementers~\cite{KhanM2007ternaryadder,HaghparastM2017ternaryadder,GokhaleP2019asymptotic}.

Because all these results are based on qutrit-controlled gates, it is crucial to understand how we can implement these fault-tolerantly if we wish to use them for practical purposes.
Unfortunately, while there is a `naive' decomposition into qutrit Clifford+$T$ gates of for instance the $\ket{2\cdots 2}$-controlled $X$ gate which uses $O(n)$ clean ancilla for $n$ ternary controls~\cite{KhanM2007ternaryadder}, ancilla-free implementations require either uniformly controlled Givens rotations~\cite{KhanF2006synthesisqudit} or qutrit-controlled qubit gates~\cite{DiY2013synthesis}, both of which are not fault-tolerant (at least as stated).  The construction of Ref.~\cite{MoragaC2016quditgeneralizedtoffoli} is conceivably fault-tolerant, but it utilizes an exponential number of gates.
This raises the question of how we can implement these $\ket{2\cdots 2}$-controlled gates efficiently using more primitive and fault-tolerant gates.

In this paper we show that when we have any qutrit Clifford+$T$ unitary $U$, we can construct an ancilla-free exact Clifford+$T$ implementation of the $\ket{2\cdots 2}$-controlled $U$ unitary which uses a number of gates polynomial in the number of controls. Specifically, for $k$ controls we require $O(k^{3.585})$ gates (this number comes from $\log_2 6 \approx 3.585$). 
Our work means in particular that we have fault-tolerant and ancilla-free implementations of all the constructions mentioned above.
Note that our result, constructing controlled Clifford+$T$ unitaries for any Clifford+$T$ unitary, is not possible with qubits when we don't allow ancillae. For instance, it is not possible to construct a qubit controlled-$T$ gate~\cite{KliuchnikovV2013singlequbitcliffordplust} or a three-controlled Toffoli~\cite{GilesB2013multiqubitcliffordplustsynthesis} using just Clifford+$T$ gates and no ancillae.
The constructions in this paper build on our work in a previous paper where we showed how to construct the qutrit single-controlled Hadamard and $S$ gates, which we used to exactly synthesise the qutrit metaplectic gate~\cite{GlaudellA2022qutritmetaplecticsubset}.
A software implementation of some of our constructions can be found on Github%
\footnote{\url{https://github.com/lia-approves/qudit-circuits/tree/main/qutrit_control_Clifford_T}}.

As an application of our construction we give an algorithm for implementing any reversible classical trit function $f\colon\{0,1,2\}^n\to \{0,1,2\}^n$ as a unitary $n$-qutrit Clifford+$T$ circuit using at most $O(3^n n^{3.585})$ Clifford+$T$ gates. We find a lower bound for this problem of $O(3^n \cdot n/\log n)$ so that our result here is within a polynomial factor of optimal.

The paper is structured as follows.
In Section~\ref{sec:preliminaries}, we recall the basics of the qutrit Clifford+$T$ gate set and the different types of control wires for qutrit unitaries, and we recall several known results for controlled qutrit unitaries.
Then in Section~\ref{sec:controlcliffordplust}, we present exact ancilla-free Clifford+$T$ constructions of any $\ket{2}^{\otimes n}$-controlled Clifford+$T$ unitary.
In Section~\ref{sec:permutations} we show how we can use our results to implement any ternary classical reversible function as a Clifford+$T$ ancilla-free unitary.
Finally, we end with some concluding remarks in Section~\ref{sec:conclusion}.

\section{Preliminaries}\label{sec:preliminaries}

A qubit is a two-dimensional Hilbert space. Similarly, a qutrit is a three-dimensional Hilbert space. We will write $\ket{0}$, $\ket{1}$, and $\ket{2}$ for the standard computational basis states of a qutrit.
Any normalised qutrit state can then be written as $\ket{\psi} = \alpha \ket{0} +  \beta \ket{1} + \gamma \ket{2}$ where $\alpha,\beta,\gamma\in \mathbb{C}$ and $|\alpha|^2 + |\beta|^2 + |\gamma|^2 = 1$.

Several concepts for qubits extend to qutrits, or more generally to qu\emph{dits}, which are $d$-dimensional quantum systems. In particular, the concept of Pauli's and Cliffords.
\begin{definition}
For a $d$-dimensional qudit, the Pauli $X$ and $Z$ gates are defined as
\begin{equation}
    X\ket{k} = \ket{k+1} \qquad\qquad Z\ket{k} = \omega^k \ket{k}
\end{equation}
where $\omega\coloneqq e^{2\pi i/d}$ is such that $\omega^d = 1$, and the addition $\ket{k+1}$ is taken modulo $d$~\cite{GottesmanD1999ftqudit,HowardM2012quditTgate}.
We define the \emph{Pauli group} as the set of unitaries generated by tensor products of the $X$ and $Z$ gate. 
We write $\mathcal{P}_n^d$ for the Pauli's acting on $n$ qudits.
\end{definition}
For qubits this $X$ gate is just the NOT gate, while $Z=\text{diag}(1,-1)$. For the duration of this paper we will work solely with qutrits, so we take $\omega$ to always be equal to $e^{2\pi i/3}$.

For a qubit there is only one non-trivial permutation of the standard basis states, which is implemented by the $X$ gate.
For qutrits there are five non-trivial permutations of the basis states. By analogy we will all call these ternary $X$ gates. These gates are $X_{+1}$, $X_{-1}$, $X_{01}$, $X_{12}$, and $X_{02}$. 
The gate $X_{\pm 1}$ sends $\ket{t}$ to $\ket{(t \pm 1) \text{ mod } 3}$ for $t \in \{0, 1, 2\}$; $X_{01}$ is just the qubit $X$ gate which is the identity when the input is $\ket{2}$; $X_{12}$ sends $\ket{1}$ to $\ket{2}$ and $\ket{2}$ to $\ket{1}$, and likewise for $X_{02}$.
Note that the qutrit Pauli $X$ gate is the $X_{+1}$ gate, while $X^\dagger = X^2 = X_{-1}$.

\subsection{The Clifford+\texorpdfstring{$T$}{T} gate set}

Another concept that translates to qutrits (or more general qudits) is that of Clifford unitaries.

\begin{definition}
    Let $U$ be a qudit unitary acting on $n$ qudits. We say it is \emph{Clifford} when every Pauli is mapped to another Pauli under conjugation by $U$. I.e.~if for any $P\in \mathcal{P}_n^d$ we have $UPU^\dagger \in \mathcal{P}_n^d$.
\end{definition}

Note that the set of $n$-qudit Cliffords forms a group under composition. For qubits, this group is generated by the $S$, Hadamard and CX gates. The same is true for qutrits, for the right generalisation of these gates.

Throughout the paper we will write $\zeta$ for the ninth root of unity $\zeta = e^{2\pi i/9}$. Note that $\zeta^3 = \omega$ and $\zeta^9=1$.

\begin{definition}
    The qutrit $S$ gate is $S\coloneqq \zeta^8 \text{diag}(1,1,\omega)$. I.e.~it multiplies the $\ket{2}$ state by the phase $\omega$ (up to a global phase).
\end{definition}
We adopt the convention of this global phase of $\zeta^8$ from Ref.~\cite{GlaudellA2019canonical}, as it will make some of our results more elegant to state (without it we would often have to say `up to global phase').

For qubits, the Hadamard gate interchanges the $Z$ basis, $\ket{0}$ and $\ket{1}$ which are the eigenstates of the Pauli $Z$, and the $X$ basis, consisting of the states $\ket{\pm} \coloneqq \frac{1}{\sqrt{2}}(\ket{0}\pm \ket{1})$. The same holds for the qutrit Hadamard.
In this case the $X$ basis consists of the following states:
\begin{align}
    \ket{+} &\coloneqq \frac{-i}{\sqrt{3}} (\ket{0}+\ket{1}+\ket{2}) \\
    \ket{\omega} &\coloneqq \frac{-i}{\sqrt{3}} (\ket{0}+\omega\ket{1}+\omega^2\ket{2}) \\
    \ket{\omega^2} &\coloneqq \frac{-i}{\sqrt{3}} (\ket{0}+\omega^2\ket{1}+\omega\ket{2})
\end{align}

\begin{definition}
    The \emph{qutrit Hadamard gate} $H$ is the gate that maps $\ket{0} \mapsto \ket{+}$, $\ket{1}\mapsto \ket{\omega}$ and $\ket{2} \mapsto \ket{\omega^2}$. As a matrix:
    \begin{equation}\label{eq:hgatedef}
        H \ \coloneqq \ \frac{-i}{\sqrt{3}}\begin{pmatrix}
            1 & 1 & 1 \\
            1 & \omega & \omega^2 \\
            1 & \omega^2 & \omega
        \end{pmatrix}
    \end{equation}
\end{definition}
We chose the global phase of the $H$ gate to be $-i$ to be in line with Refs.~\cite{GlaudellA2019canonical,GlaudellA2022qutritmetaplecticsubset}.

Note that, unlike the qubit Hadamard, the qutrit Hadamard is \emph{not} self-inverse. 
Instead we have $H^2 = -X_{12}$ so that $H^4 = \mathbb{I}$. This means that $H^\dagger = H^3$.

We have $Z = H^\dagger X_{+1} H$, and hence we can call $Z=Z_{+1}$ by analogy.
We can then also define the `$Z$ permutation gates' by analogy. For instance, $Z_{01} \coloneqq H X_{01} H^\dagger$. It will in fact be helpful to define a larger class of $Z$ phase gates.
\begin{definition}\label{def:Z-phase-gate}
    We write $Z(a,b)$ for the \emph{phase gate} that acts as $Z(a,b)\ket{0} = \ket{0}$, $Z(a,b)\ket{1} = \omega^a\ket{1}$ and $Z(a,b)\ket{2} = \omega^b\ket{2}$ where we take $a,b\in \mathbb{R}$.
\end{definition}
We define $Z(a,b)$ in this way, taking $a$ and $b$ to correspond to phases that are multiples of $\omega$, because $Z(a,b)$ will turn out to be Clifford iff $a$ and $b$ are integers, so that we can easily see from the parameters whether the gate is Clifford or not. Note that the collection of all $Z(a,b)$ operators constitutes the group of diagonal single-qutrit unitaries modded out by a global phase. Composition of these operations is given by $Z(a,b)\cdot Z(c,d)=Z(a+b,c+d)$. Note that up to a global phase we have $S = Z(0,1)$.

In Definition~\ref{def:Z-phase-gate} we defined the Z phase gate. Similarly, we can define the X phase gates, that give a phase to the X basis gates.
\begin{definition}
    We define the \emph{X phase gates} to be $X(a,b) \coloneqq HZ(a,b)H^\dagger$ where $a,b \in \mathbb{R}$.
\end{definition}
We have in fact already seen examples of such X phase gates: $X_{+1} = X(2,1)$ and $X_{-1} = X(1,2)$.

Note that any single-qutrit Clifford can be represented (up to global phase) as a composition of Clifford $Z$ and $X$ phase gates.
In particular, we can represent the qutrit Hadamard in the following ways~\cite{GongX2017equivalence}:
\begin{align}
    \label{eq:EUrule}
    H &= Z(2,2)X(2,2)Z(2,2) = X(2,2)Z(2,2)X(2,2) \\
    H^\dagger &= Z(1,1)X(1,1)Z(1,1) = X(1,1)Z(1,1)X(1,1)
\end{align}
In analogy to its qubit counterpart, we will call these \emph{Euler decompositions} of the Hadamard.

The final Clifford gate we need is the qutrit CX.

\begin{definition}
    The qutrit CX is the two-qutrit gate defined by $\text{CX}\ket{i,j} = \ket{i,i+j}$ where the addition is taken modulo $3$.
\end{definition}

\begin{proposition}
    Let $U$ be a qutrit Clifford unitary. Then $U$ can be written as a composition of the $S$, $H$ and CX gates up to global phase.
\end{proposition}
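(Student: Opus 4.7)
The plan is to mimic the qubit proof, adapted to qutrits, splitting the argument into two subclaims: (i) that $\{H, S\}$ alone generates the single-qutrit Clifford group up to global phase, and (ii) that adding $\mathrm{CX}$ extends this to the full $n$-qutrit Clifford group.

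For (i), I would invoke the Euler decomposition in Eq.~\eqref{eq:EUrule} together with the remark following Definition~\ref{def:Z-phase-gate}: every single-qutrit Clifford is, up to global phase, a product of $H$'s and Clifford phase gates $Z(a,b)$ and $X(a,b)$ with $a,b \in \mathbb{Z}_3$. Since $X(a,b) = H Z(a,b) H^\dagger$, it suffices to produce every $Z(a,b)$ from $\{H, S\}$. Up to phase $S = Z(0,1)$, and conjugating by $H^2 = -X_{12}$ sends $Z(0,1)$ to $Z(1,0)$; products of $Z(0,1)$ and $Z(1,0)$ then exhaust all $Z(a,b)$ with $a,b \in \mathbb{Z}_3$, and therefore by the Euler decomposition reach every single-qutrit Clifford up to phase.

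For (ii), I would use the symplectic picture of the Clifford group: every $n$-qutrit Clifford $U$ acts by conjugation on the generators $X_i, Z_i$ of $\mathcal{P}_n^3$, and modulo scalars this action is a symplectic automorphism of $\mathbb{Z}_3^{2n}$. I would first check by direct computation that the conjugation actions of $H_i$, $S_i$ and $\mathrm{CX}_{ij}$ realise a generating set of elementary transvections for $\mathrm{Sp}(2n,\mathbb{Z}_3)$. Then, given any Clifford $U$, I would multiply on the left by such generators — essentially Gaussian-eliminating the induced $\mathbb{Z}_3$-symplectic matrix — until the residual unitary acts trivially on every Pauli. Such a unitary must be a Pauli times a scalar, and every Pauli lies in $\langle H, S \rangle$ up to phase by (i).

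The main obstacle is the symplectic-generation step in (ii); while it is structurally identical to the qubit argument and $\mathrm{Sp}(2n,\mathbb{Z}_3)$ is generated by the same families of elementary transvections as its $\mathbb{Z}_2$ counterpart, some care is required because of the global phase conventions adopted here ($S = \zeta^8 \text{diag}(1,1,\omega)$, $H$ carrying a factor of $-i$, and $H$ being fourth-order rather than self-inverse). These phases cancel on conjugation so the symplectic reduction itself is unaffected, but one must track residual phases carefully to conclude the final equality $U = e^{i\theta}\,w(H, S, \mathrm{CX})$ for a word $w$ in the three generators.
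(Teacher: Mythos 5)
The paper itself offers no proof of this proposition: it is quoted as a standard fact about the prime-dimensional qudit Clifford group (generation by local Hadamard and phase gates together with CX goes back to~\cite{GottesmanD1999ftqudit}), so there is no in-paper argument to compare yours against. What you have written is the standard proof in outline, and it is essentially sound, with two caveats. First, your step (i) carries no independent content beyond the paper's own unproven remark that every single-qutrit Clifford is, up to phase, a product of Clifford $Z(a,b)$ and $X(a,b)$ gates: granting that remark, your reduction via $X(a,b)=HZ(a,b)H^\dagger$, $S\propto Z(0,1)$, and $H^2\propto X_{12}$ (so that $X_{12}Z(0,1)X_{12}=Z(1,0)$ and $Z(a,b)=Z(1,0)^aZ(0,1)^b$) is correct; but if the proposition is to be proved from scratch, that remark is precisely the nontrivial part of the single-qutrit case and would itself need justification, e.g.\ via the order count $216=\lvert\mathrm{SL}(2,\mathbb{Z}_3)\rvert\cdot 9$ for the single-qutrit Clifford group modulo phases. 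Second, in step (ii) there is a small mismatch in the endgame: after eliminating the symplectic part, the residual unitary fixes every Pauli only \emph{up to a phase in} $\{1,\omega,\omega^2\}$, and it is that weaker condition which forces it to be a Pauli times a scalar (a unitary that fixes every Pauli exactly is already a scalar); the conclusion is unaffected since, as you note, the Paulis lie in $\langle H,S\rangle$ up to phase. The generation of $\mathrm{Sp}(2n,\mathbb{Z}_3)$ by the transvections induced by $H_i$, $S_i$, $\mathrm{CX}_{ij}$ is indeed the qubit argument transplanted to $\mathbb{Z}_3$ and goes through, and since the statement is only claimed up to global phase, the phase conventions you flag at the end ($\zeta^8$, the factor $-i$, $H^4=\mathbb{I}$) are immaterial.
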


Clifford gates are efficiently classically simulable, so we need to add another gate to get a universal gate set for quantum computing~\cite{GottesmanD1999ftqudit}.
This brings us to the definition of the qutrit $T$ gate.
\begin{definition}
    The qutrit $T$ gate is the $Z$ phase gate defined as $T \coloneqq \text{diag}(1,\zeta,\zeta^8)$~\cite{PrakashS2018normalform,CampbellE2012tgatedistillation,HowardM2012quditTgate}.
\end{definition}
Note that we could have written $T = Z(1/3,-1/3)$ as well.

Like the qubit $T$ gate, the qutrit $T$ gate belongs to the third level of the Clifford hierarchy, can be injected into a circuit using magic states, and its magic states can be distilled by magic state distillation. This means that we can fault-tolerantly implement this qutrit $T$ gate on many types of quantum error correcting codes.
Also as for qubits, the qutrit Clifford+$T$ gate set is approximately universal, meaning that we can approximate any qutrit unitary using just Clifford gates and the $T$ gate~\cite[Theorem 1]{CuiS2015universalmetaplectic}.

\subsection{Controlled unitaries}

When we have an $n$-qubit unitary $U$, we can speak of the controlled gate that implements $U$. This is the $(n+1)$-qubit gate that acts as the identity when the first qubit is in the $\ket{0}$ state, and implements $U$ on the last $n$ qubits if the first qubit is in the $\ket{1}$ state.

For qutrits there are multiple notions of control.

\begin{definition}\label{def:ket2-controlled}
Let $U$ be a qutrit unitary. Then the \emph{$\ket{2}$-controlled $U$} is the unitary $\ket{2}$-$U$ that acts as
\begin{equation*}
    \ket{0}\otimes \ket{\psi} \mapsto \ket{0}\otimes \ket{\psi} \qquad
    \ket{1}\otimes \ket{\psi} \mapsto \ket{1}\otimes \ket{\psi} \qquad
    \ket{2}\otimes \ket{\psi} \mapsto \ket{2}\otimes U\ket{\psi}
\end{equation*}
I.e.~it implements $U$ on the last qutrits if and only if the first qutrit is in the $\ket{2}$ state.
\end{definition}
Note that by conjugating the first qutrit with $X_{+1}$ or $X_{-1}$ gates we can make the gate also be controlled on the $\ket{1}$ or $\ket{0}$ state.

A different notion of qutrit control was introduced in Ref.~\cite{BocharovA2017ternaryshor}:
\begin{definition}\label{def:controlled-gates}
    Given a qutrit unitary $U$ we define
\begin{equation}
     \Lambda(U)\ket{c}\ket{t} = \ket{c} \otimes (U^c \ket{t}).
\end{equation}
I.e.~we apply the unitary $U$ a number of times equal to to the value of the control qutrit, so that if the control qutrit is $\ket{2}$ we apply $U^2$ to the target qutrits.
\end{definition}

The Clifford CX gate defined earlier is in this notation equal to $\Lambda(X_{+1})$.
Note that we can get this latter notion of control from the former one: just apply a $\ket{1}$-controlled $U$, followed by a $\ket{2}$-controlled $U^2$. 
Adding controls to a Clifford gate generally makes it non-Clifford. In the case of the CX gate, which is $\Lambda(X_{+1})$, it is still Clifford, while the $\ket{2}$-controlled $X_{+1}$ gate is \emph{not} Clifford.

A number of Clifford+$T$ constructions for controlled qutrit unitaries are already known. For instance, all the $\ket{2}$-controlled permutation $X$ gates can be built from the constructions given in Ref.~\cite{BocharovA2016ternaryarithmetics}, which we present here consistent with our notation.
The $\ket{0}$-controlled $Z$ gate can be constructed by the following 3 $T$ gate circuit, decomposition by Bocharov, Roetteler, and Svore~\cite[Figure 6]{BocharovA2017ternaryshor}:
\begin{equation}
    \label{eq:zczp1}
    \tikzfig{zczp1}
\end{equation}
Here the circles with a $1$, $2$, or $\Lambda$ inside denote controls of the types defined above.
By conjugating the control qutrit by either $X^\dagger$ before and $X$ after, or $X$ before and $X^\dagger$ after, the $\ket{1}$- and $\ket{2}$-controlled versions are respectively obtained.
Taking the adjoint of Eq.~\eqref{eq:zczp1} has the effect of changing the target operation from $Z$ to $Z^\dagger$.
Finally, the target can be changed to $X$ or to $X^\dagger$, by conjugating by a $H$ and $H^\dagger$ pair.
Using these gates we can build the $\ket{2}$-controlled $X_{01}$ gate:
\begin{equation}
        \label{eq:tcd01}
        \tikzfig{tcd01}
\end{equation}
Conjugating by the appropriate single-qutrit Clifford gates, these two circuits~\eqref{eq:zczp1} and~\eqref{eq:tcd01} suffice to construct any singly-controlled permutation $X$ or $Z$ gate.

The work done in Ref.~\cite{BocharovA2016ternaryarithmetics} also describes an approach which could be applied to constructing the $\ket{2}^{\otimes k}$-controlled $Z_{+1}$ gate for any $k$, and hence also the $\ket{2}^{\otimes k}$-controlled $Z_{-1}$, $X_{+1}$ and $X_{-1}$ gates for any $k$, by solving a system of linear equations modulo 3 where the number of equations is exponential in $k$.  They present the explicit circuit for this for $k = 1$, but not for any $k > 1$.  Their method does not suffice to construct the $k$-controlled $X_{01}$ gate due to the $X_{01}$ gate not being diagonal when conjugated by Hadamards.

Note that we usually don't care about global phases in the definition of unitaries. However, when adding controls, the global phase becomes `local' and hence is relevant~\cite[Lemma 5.2]{BarencoA1995elementarygates}.
\begin{definition}\label{def:controlledglobalphasegate}
     A \emph{controlled global phase gate} is a controlled unitary where the unitary is $e^{i \phi} \mathbb{I}$, for an identity matrix $I$ and some phase $\phi$.
 \end{definition}
The number of qutrits the identity matrix acts on in this definition is irrelevant as the phase factor can be ``factored out'' from the tensor product of the controlled and target qutrits:
 \begin{equation}
     \label{eq:controlledglobalphase}
     \tikzfig{controlledglobalphase}
 \end{equation}
Here we wrote the global phase $\phi$ as  $\phi = \gamma \cdot 2\pi/3$ so that we can represent the phase gate as a multiple of $\omega$.

We will see that it can be easier or more cost effective to construct a controlled unitary `up to a controlled global phase', and that to implement the unitary exactly, additional work must be done.  A generalisation of this idea was used for qubit controlled gates in Ref.~\cite{MaslovD2016relativephasetoffoli} to find more efficient decompositions of controlled gates.

In previous work~\cite{GlaudellA2022qutritmetaplecticsubset} we also found a construction of the $\ket{2}$-controlled $S$ gate and the $\ket{2}$-controlled Hadamard gate:
 \begin{equation}\label{eq:controlled-S-circ}
        \tikzfig{tcspdagphase}
\end{equation}
\begin{equation}\label{eq:controlled-had-circ}
    \tikzfig{controlled-had-circ}
\end{equation}

Note that the circuit for the controlled-$S$ gate that Eq.~\eqref{eq:controlled-S-circ} is based on in Ref.~\cite{GlaudellA2022qutritmetaplecticsubset} is only correct up to a controlled global phase. In this paper we defined our $S$ gate to include this global phase, so that it doesn't appear here. Due to this redefinition, the circuit Eq.~\eqref{eq:controlled-had-circ} acquires a (non-controlled) global phase of $\zeta$. This is a problem if we wish to use the gates in this circuit as a base for adding more controls, as this global phase then becomes a \emph{controlled} global phase. We will see later that there are however ways around this.

\section{Adding controls to Clifford+$T$ gates}
\label{sec:controlcliffordplust}
In this section we will implement each the $\ket{2}^{\otimes k}$-controlled versions of every qutrit Clifford+$T$ unitary.
As Clifford+$T$ unitaries are built out of Hadamard, CX, $S$ and $T$ gates, it suffices to show how we can construct $k$-controlled versions of each of these gates.

We will do this in stages, first showing how to construct the $X$ permutation gates with two controls, and then any number of controls, before moving on to the other Clifford+$T$ gates.

\subsection{Permutation gates with two controls}

Before we can make the step to having an arbitrary number of controls, we first need to construct the permutation gates with two controls.
First, we construct the $\ket{22}$-controlled $X_{01}$ gate. To do this we use the following lemma that allows us to add more controls to a unitary, once we know how to construct it with just one control.
 
\begin{lemma}
    \label{lemma:vvsqtrick}
    For any qutrit unitary $V$ with a construction for $\ket{2}$-controlled $V$ and $\ket{2}$-controlled $V^\dagger$, we can build the circuit consisting of the $\ket{22}$-controlled $V$ multiplied by  the $\ket{21}$-controlled $V^2$ unitary:
     \begin{equation}
     \textnormal{\tikzfig{c2c1vsqc2c2v}}
     \label{eq:c2c1vsqc2c2v}
    \end{equation}
\end{lemma}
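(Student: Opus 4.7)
The plan is to exhibit an explicit three-qutrit circuit built from the two given primitives ($\ket{2}$-controlled $V$ and $\ket{2}$-controlled $V^\dagger$) together with Clifford gates, and to verify by direct case analysis that its action on each of the nine computational basis states $\ket{c_1, c_2}$ of the two control qutrits matches the right-hand side of Eq.~\eqref{eq:c2c1vsqc2c2v}: namely, the identity on the target for $(c_1, c_2) \notin \{(2,1), (2,2)\}$, the gate $V^2$ for $(c_1, c_2) = (2,1)$, and $V$ for $(c_1, c_2) = (2,2)$. Both sides are diagonal in the $Z$ basis of the two controls, so it suffices to check equality on these nine basis states.

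The mechanism I would use is a standard compute--apply--uncompute sandwich. First, apply a $\ket{2}$-controlled Clifford permutation gate between the two control qutrits (for instance a $\ket{2}$-controlled $X_{+1}$ or $\ket{2}$-controlled $X_{12}$ from qutrit~1 to qutrit~2, which is available via the singly-controlled permutation circuits already recalled in Eqs.~\eqref{eq:zczp1} and~\eqref{eq:tcd01}): this temporarily relabels qutrit~2's computational basis value as a function of $c_1$. Next, apply a $\ket{2}$-controlled $V$ or $V^\dagger$ from qutrit~2 to the target, which now fires conditionally on the relabelled value and hence conditionally on a specific pair $(c_1, c_2)$. Finally, undo the Clifford permutation to restore qutrit~2 to its initial value. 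Composing several such sandwiches with carefully chosen Cliffords and powers of $V$, and using the $V^\dagger$ primitive to cancel the spurious contributions on the $c_1 \neq 2$ slices where ``qutrit~2 is currently in $\ket{2}$'' would otherwise incorrectly trigger a gate, we can arrange for the net action to equal the product of the $\ket{22}$-controlled $V$ and $\ket{21}$-controlled $V^2$ unitaries.

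The main obstacle is pinning down the right sequence of primitives and Cliffords so that the contributions sum exactly to the desired pattern: any single $\ket{2}$-controlled $V^{\pm 1}$ application contributes $V^{\pm 1}$ on an entire three-element subset of the $3\times 3$ grid of $(c_1, c_2)$ values (a line, when the intermediate Cliffords are $X$-type), so cancelling the contributions on seven of the nine basis states while leaving exactly $V^2$ at $(2,1)$ and $V$ at $(2,2)$ is a tight combinatorial constraint. This is precisely the reason the lemma hypothesises the availability of both $\ket{2}$-$V$ and $\ket{2}$-$V^\dagger$: without inverse contributions one cannot erase the unwanted $c_1 \neq 2$ spillover. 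Once the correct sandwich is written down, the remainder of the argument is a mechanical tabulation of the nine cases, confirming the claimed circuit identity.
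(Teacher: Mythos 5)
Your strategy is the same one the paper uses: the displayed circuit is a ternary analogue of the Sleator--Weinfurter decomposition, built from $\ket{2}$-controlled permutation gates acting between the two control qutrits together with the given $\ket{2}$-controlled $V^{\pm1}$ primitives, and its correctness is checked by a case distinction over the nine computational values of the control pair. Your observation that each primitive contributes $V^{\pm1}$ along a ``line'' of the $3\times3$ control grid, and that the target pattern ($V$ at $(2,2)$, $V^2$ at $(2,1)$, identity elsewhere) is what forces the use of both $V$ and $V^\dagger$, is exactly the right way to see why the construction works. (One caution: the intermediate gates must be $\ket{2}$-controlled permutations, which only act when the first control is $\ket{2}$; using the Clifford CX $=\Lambda(X_{+1})$ instead would create spillover on the $c_1=1$ row that cannot be cancelled.)

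The gap is that you never actually solve the combinatorial constraint you set up, so no circuit is exhibited --- and the lemma is precisely the assertion that a specific circuit exists. Your own counting argument closes this. Label the qutrits $a$ (first control), $b$ (second control), $c$ (target), and write $x_1,x_2,x_3,x_4$ for the net powers of $V$ applied by target gates firing on the lines $\{(0,2),(1,2),(2,2)\}$ (controlled on $b$), $\{(0,2),(1,2),(2,1)\}$ (controlled on $b$ after one $\ket{2}$-controlled $X_{+1}$ on $(a,b)$), $\{(0,2),(1,2),(2,0)\}$ (after two such), and $\{(2,0),(2,1),(2,2)\}$ (controlled on $a$). The requirements give $x_1+x_2+x_3=0$, $x_1+x_4=1$, $x_2+x_4=2$, $x_3+x_4=0$, whose unique solution is $x_1=0$, $x_2=1$, $x_3=-1$, $x_4=1$, yielding for instance the six-gate circuit
\begin{equation*}
 \bigl(\ket{2}\text{-}X_{+1}\bigr)_{ab}\;\bigl(\ket{2}\text{-}V\bigr)_{bc}\;\bigl(\ket{2}\text{-}X_{+1}\bigr)_{ab}\;\bigl(\ket{2}\text{-}V^{\dagger}\bigr)_{bc}\;\bigl(\ket{2}\text{-}X_{+1}\bigr)_{ab}\;\bigl(\ket{2}\text{-}V\bigr)_{ac}
\end{equation*}
read in circuit order (leftmost gate applied first, with the first subscript the control). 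The three $\ket{2}$-controlled $X_{+1}$ gates compose to the identity on $b$ since $X_{+1}^3=\mathbb{I}$; the first $V$ fires on $(0,2),(1,2),(2,1)$, the $V^\dagger$ on $(0,2),(1,2),(2,0)$, and the last $V$ on $(2,0),(2,1),(2,2)$, and since all target gates are powers of $V$ they commute, so the net action is $V^2$ at $(2,1)$, $V$ at $(2,2)$, and the identity elsewhere, as claimed. Once this (or the paper's equivalent arrangement) is written down, your nine-case tabulation finishes the proof.
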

\begin{corollary}
    \label{cor:vvsqtrick}
    If we pick $V=X_{01}$, then we have $V^2=\mathbb{I}$, so that this construction gives us a way to construct the $\ket{22}$-controlled $X_{01}$ using just $\ket{2}$-controlled $X_{01}$ and $X_{+1}$ and $X_{-1}$ gates, which we already know how to construct in Clifford+$T$.
\end{corollary}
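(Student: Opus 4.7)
My plan is to treat the corollary as a direct specialisation of Lemma~\ref{lemma:vvsqtrick}, so the only real work is handling the two algebraic facts that $X_{01}$ is involutive and then auditing that the lemma's circuit stays inside the Clifford+$T$ gate set for this choice of $V$.

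First I would record that $X_{01}$ only swaps $\ket{0}$ and $\ket{1}$ while fixing $\ket{2}$, so $X_{01}^{\dagger}=X_{01}$ and $X_{01}^{2}=\mathbb{I}$. The first identity collapses the two hypotheses of Lemma~\ref{lemma:vvsqtrick} --- a $\ket{2}$-controlled $V$ together with a $\ket{2}$-controlled $V^{\dagger}$ --- into a single requirement, which is supplied in Clifford+$T$ by Eq.~\eqref{eq:tcd01}. I would then feed $V=X_{01}$ into the lemma's construction, which by the lemma's guarantee outputs a circuit realising the product of a $\ket{22}$-controlled $X_{01}$ and a $\ket{21}$-controlled $X_{01}^{2}$. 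Using $X_{01}^{2}=\mathbb{I}$, the second factor is a $\ket{21}$-controlled identity, which is just the identity unitary, so the entire circuit exactly implements the $\ket{22}$-controlled $X_{01}$.

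Finally I would audit the gates appearing inside the lemma's circuit to confirm the claim about the gate set: beyond the two copies of $\ket{2}$-controlled $X_{01}$ (which are Clifford+$T$ by Eq.~\eqref{eq:tcd01}), the only remaining gates are $X_{+1}$ and $X_{-1}$ applied to the first control qutrit, both of which are qutrit Paulis and hence Clifford. I do not anticipate any real obstacle here --- the substantive content lives in Lemma~\ref{lemma:vvsqtrick}, and once that lemma is in hand the corollary is a clean, essentially mechanical specialisation that exploits the involutive character of $X_{01}$ to kill the $\ket{21}$-controlled factor.
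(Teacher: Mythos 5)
Your proposal is correct and takes essentially the same route as the paper, which gives no separate proof: the corollary is exactly the specialisation $V=X_{01}$ of Lemma~\ref{lemma:vvsqtrick}, with $X_{01}^\dagger=X_{01}$ collapsing the two hypotheses into the single circuit of Eq.~\eqref{eq:tcd01} and $X_{01}^2=\mathbb{I}$ killing the $\ket{21}$-controlled factor. One small correction to your gate audit: the $X_{+1}$ and $X_{-1}$ gates appearing in the lemma's circuit are themselves $\ket{2}$-controlled two-qutrit gates acting between the two control wires (non-Clifford, ``already known to be constructible'' via Eq.~\eqref{eq:zczp1} conjugated by Hadamards), not bare Pauli gates on the first control --- uncontrolled single-qutrit gates on the control wires could not combine two singly-controlled gates into a genuinely doubly-controlled one, which is the whole point of the Sleator--Weinfurter-style decomposition.
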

Lemma~\ref{lemma:vvsqtrick} is easily shown to be correct by doing case distinctions on the control wires.
While we believe this construction to be new, it is based on the qubit Sleator-Weinfurter decomposition~\cite[Lemma 6.1]{BarencoA1995elementarygates}:
\begin{equation}
    \tikzfig{sleatorweinfurter}
\end{equation}
We note that Corollary~\ref{cor:vvsqtrick} contradicts the statement of Ref.~\cite{MoragaC2014basicternary} that ``a 5-gates Barenco et al. type of realization of a ternary (generalized) Toffoli gate without adding an ancillary line is simply not possible''; their analysis of ternary generalizations of the Sleator-Weinfurter decomposition did not account for decompositions of the form of Eq.~\eqref{eq:c2c1vsqc2c2v}.

\begin{lemma}
    We can construct the $\ket{22}$-controlled $X_{01}$ gate in Clifford+$T$ without using any ancilla.
\end{lemma}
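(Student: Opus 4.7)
The plan is to apply Corollary~\ref{cor:vvsqtrick} directly, with $V = X_{01}$. Before invoking it, I would verify that the primitive ingredients required by Lemma~\ref{lemma:vvsqtrick} are already available as ancilla-free Clifford+$T$ circuits: the $\ket{2}$-controlled $X_{01}$ gate is constructed in Eq.~\eqref{eq:tcd01}, and the permutation gates $X_{+1}$ and $X_{-1}$ used to convert between $\ket{1}$- and $\ket{2}$-type controls on the first qutrit are Clifford.

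The two key observations that make the application work are: (i) $X_{01}$ is an involution, so $X_{01}^\dagger = X_{01}$, which means the $\ket{2}$-controlled $V^\dagger$ subcircuit appearing on the left-hand side of Eq.~\eqref{eq:c2c1vsqc2c2v} is the very same circuit as the $\ket{2}$-controlled $V$; and (ii) $V^2 = X_{01}^2 = \mathbb{I}$, so the $\ket{21}$-controlled $V^2$ factor on the right-hand side of Eq.~\eqref{eq:c2c1vsqc2c2v} acts as the identity. Combining these, the right-hand side collapses to exactly the $\ket{22}$-controlled $X_{01}$ gate, while the left-hand side is a composition of $\ket{2}$-controlled $X_{01}$ gates and $X_{\pm 1}$ conjugations on the control wires --- all ancilla-free Clifford+$T$.

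The main work is really contained in Lemma~\ref{lemma:vvsqtrick} itself, where one has to verify the displayed circuit identity of Eq.~\eqref{eq:c2c1vsqc2c2v} by case analysis on the values of the two control qutrits (nine cases, most of which collapse immediately since a $\ket{2}$-controlled subgate only fires when its control is in $\ket{2}$, and the $X_{\pm 1}$ conjugations simply permute which branch of the control triggers the middle gate). Once that lemma is in hand, the present statement requires no further circuit synthesis: the involution property of $X_{01}$ is the only additional ingredient, and it is precisely what reduces the generic Sleator--Weinfurter--style decomposition to a clean $\ket{22}$-controlled $X_{01}$ with no residual $V^2$ term to cancel.
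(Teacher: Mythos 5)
Your proposal is correct and follows exactly the paper's route: the lemma is obtained by instantiating Lemma~\ref{lemma:vvsqtrick} (via Corollary~\ref{cor:vvsqtrick}) with $V = X_{01}$, using $X_{01}^2 = \mathbb{I}$ to kill the residual $\ket{21}$-controlled $V^2$ factor and the already-constructed $\ket{2}$-controlled $X_{01}$ of Eq.~\eqref{eq:tcd01} together with Clifford $X_{\pm 1}$ conjugations as the only ingredients. Your additional observation that $X_{01}^\dagger = X_{01}$ makes the required $\ket{2}$-controlled $V^\dagger$ identical to the $\ket{2}$-controlled $V$ is a correct and useful detail that the paper leaves implicit.
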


Additionally, by conjugating by the appropriate single-qutrit Cliffords on the bottom qutrit, we can also construct the $\ket{22}$-controlled $X_{12,}$ and $X_{02}$ gates.

\begin{remark}
    The $\ket{22}$-controlled $X_{01}$ gate was decomposed by Di and Wei~\cite{DiY2013synthesis} in terms of qubit Clifford+$T$ operations, where the CX gate could be performed pairwise on any two of the three qutrit Z-basis states.
    As far as we are aware, there does not exist an error correction protocol that can, without code switching, correct qubit Clifford operations on all three pairs of Z-basis states. 
    The construction above is the first qutrit Clifford+$T$, and therefore suitable for fault-tolerant computation, decomposition of the $\ket{22}$-controlled $X_{01}$ gate.
\end{remark}

We can use this gate in turn to build the $\ket{22}$-controlled $X_{+1}$ gate.
To do this we will adapt a well-known construction for qubit controlled phases. Recall that if we can implement $k$-controlled Toffoli gates and the square root of a phase gate, that we can then also implement the $k$-controlled phase gate~\cite{BarencoA1995elementarygates}:
\begin{equation}\label{qubitgenmczphase}
    \tikzfig{qubitgenmczphase}
\end{equation}
This works because when the Toffoli `fires' the $X$ gate is applied, and we have $XR_Z(\alpha) X \propto Z(-\alpha)$.
\begin{lemma}
    \label{lemma:two-controlled-Xplus1}
    We can construct the $\ket{22}$-controlled $X_{+1}$ gate in Clifford+$T$ without using any ancillae.
\end{lemma}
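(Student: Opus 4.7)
The plan is to adapt the qubit construction of Eq.~\eqref{qubitgenmczphase} to the qutrit setting, using the just-constructed $\ket{22}$-controlled $X_{01}$ gate in place of the Toffoli. Since $X_{+1}$ is a permutation rather than a diagonal phase gate, I will first build the $\ket{22}$-controlled $Z$ gate and then conjugate the target by a Hadamard, exploiting the identity $X_{+1} = H Z H^\dagger$ which follows from the preliminaries' $Z = H^\dagger X_{+1} H$.

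For the $\ket{22}$-controlled $Z$, I seek a single-qutrit diagonal Clifford $V$ such that $X_{01} V^\dagger X_{01} V \propto Z$: the four-stage circuit $(C^2 X_{01})(I\otimes I\otimes V^\dagger)(C^2 X_{01})(I\otimes I\otimes V)$, where $C^2 X_{01}$ denotes the $\ket{22}$-controlled $X_{01}$ from the previous lemma, then acts as this target operation when both controls are in $\ket{2}$ and as the identity otherwise, by the usual case distinction on the control registers. Writing $V = Z(c,d)$, a one-line computation from the definition of $Z(c,d)$ gives $X_{01} Z(c,d) X_{01} = \omega^{c} Z(-c,d-c)$, and hence $X_{01} V^\dagger X_{01} V = \omega^{-c} Z(2c,c)$. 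The choice $V = Z^\dagger = Z(2,1)$ matches $Z(2c,c) = Z(1,2) = Z$, producing an overall $\omega Z$ on the target. Conjugating the circuit by $H$ on the target qutrit then delivers the $\ket{22}$-controlled $\omega X_{+1}$.

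It remains to strip the residual $\ket{22}$-controlled global phase of $\omega$. By Eq.~\eqref{eq:controlledglobalphase}, a $\ket{2}$-controlled global phase of $\omega^{-1}$ is a single-qutrit $Z(0,-1)$ on the control; applying a $\ket{2}$-controlled version of this between the two control qutrits, which is essentially the $\ket{2}$-controlled $S^\dagger$ of Eq.~\eqref{eq:controlled-S-circ}, cancels the unwanted phase. Every ingredient --- the $\ket{22}$-controlled $X_{01}$, the single-qutrit $Z^{\pm 1}$ and $H$, and the $\ket{2}$-controlled $S^\dagger$ --- is already ancilla-free in Clifford+$T$, so the final circuit is as well. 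The main delicate point is bookkeeping: the requirement $Z(2c,c)=Z$ forces $c\equiv 2\pmod 3$, which prevents $\omega^{-c}$ from being absorbed, so the controlled global phase correction is genuinely unavoidable in this approach; moreover, the $\zeta^{8}$ global phase absorbed into our definition of $S$ must be tracked carefully and cancelled via additional single-qutrit Clifford phase gates on the controls. This is routine but is the most fiddly part of the argument.
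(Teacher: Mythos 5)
Your reduction to a controlled phase gate is a genuinely different route from the paper's, which stays entirely within the permutation gates: the paper's circuit for Eq.~\eqref{eq:two-controlled-Xplus1} interleaves two $\ket{22}$-controlled $X_{12}$ gates (Clifford conjugates of the $\ket{22}$-controlled $X_{01}$) with singly-controlled $X_{\pm1}$ gates, and its correctness is the two identities $X_{12}X_{+1}X_{12}=X_{-1}$ and $X_{-1}^2=X_{+1}$. Because everything there is a permutation, no phases ever appear. Your computation up to the $\ket{22}$-controlled $\omega Z$ is correct ($X_{01}Z(c,d)X_{01}=\omega^c Z(-c,d-c)$ checks out, and the case analysis on the controls is fine), and the Hadamard conjugation step is also fine.

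The gap is in the phase clean-up, which you flag as ``routine'' but which in fact fails as described. The residual $\ket{22}$-controlled $\omega\mathbb{I}$ equals, by Eq.~\eqref{eq:controlledglobalphase}, a $\ket{2}$-controlled $Z(0,1)$ acting on the two control qutrits, so you need its inverse, the $\ket{2}$-controlled $Z(0,-1)$. With the paper's convention $S=\zeta^8\,\text{diag}(1,1,\omega)$, the gate of Eq.~\eqref{eq:controlled-S-circ} implements the $\ket{2}$-controlled $S$ \emph{exactly}, so composing with the $\ket{2}$-controlled $S^\dagger$ leaves behind a $\text{diag}(1,1,\zeta)$ on the outermost control. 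This leftover cannot be ``cancelled via additional single-qutrit Clifford phase gates'': it is not a Clifford phase gate ($Z(0,1/3)$ has a non-integer parameter), and it is not even a single-qutrit Clifford+$T$ unitary --- its determinant is $\zeta$, whereas $H$, $T$ have determinant $1$ and $S$ has determinant $\omega^2$, so every single-qutrit Clifford+$T$ operator has determinant in $\{1,\omega,\omega^2\}$. Realising this $\ket{2}$-controlled $\zeta\mathbb{I}$ ancilla-free as a multi-qutrit circuit is precisely the nontrivial content of the paper's later Lemmas~\ref{lem:z22phase} and~\ref{lem:Z01-gate-control} and their corollary, so you cannot treat it as bookkeeping here. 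The good news is that your own framework repairs easily: conjugate with $X_{12}$ instead of $X_{01}$. Since $X_{12}Z(c,d)X_{12}=Z(d,c)$ with no prefactor, choosing $V=Z(1,0)$ gives $X_{12}V^\dagger X_{12}V=Z(1,-1)=Z$ exactly, the uncontrolled $V,V^\dagger$ global phases cancel against each other, and no controlled global phase arises at all.
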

\begin{proof}
    Consider the following circuit:
\begin{equation}\label{eq:two-controlled-Xplus1}
    \tikzfig{two-controlled-Xplus1}
\end{equation}
This works because $X_{12}X_{+1}X_{12} = X_{-1}$ and $X_{-1}^2 = X_{+1}$. 
\end{proof}
\begin{corollary}\label{cor:conj22}
    By conjugating by appropriate Clifford gates, we can use the $\ket{22}$-controlled $X_{+1}$ and $X_{01}$ gates to construct any $\ket{xy}$-controlled permutation gate where $x, y \in \{0,1,2\}$ of our choosing.
\end{corollary}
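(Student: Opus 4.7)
The plan is to decouple two independent tasks: (i) changing each control wire from firing on $\ket{2}$ to firing on an arbitrary basis state $\ket{x}$, and (ii) changing the target from $X_{+1}$ or $X_{01}$ to any of the five non-identity permutation gates $\{X_{+1}, X_{-1}, X_{01}, X_{12}, X_{02}\}$. Both tasks will be handled purely by conjugation with single-qutrit Clifford gates on individual wires, so we remain in Clifford+$T$ and introduce no ancillae.

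For task (i), the observation is that if $C$ denotes a $\ket{2}$-controlled $U$ and $P$ is any single-qutrit Clifford with $P\ket{2}=\ket{x}$, then conjugating the control wire of $C$ by $P$ produces a $\ket{x}$-controlled $U$, since the resulting operator fires precisely when the input control state is $P\ket{2}=\ket{x}$ and acts as the identity on the other two basis states. Concrete choices are $P_0 = X_{+1}$, $P_1 = X_{-1}$, and $P_2 = \mathbb{I}$. Applying the appropriate $P_x$ on the first control wire and $P_y$ on the second converts any $\ket{22}$-controlled gate into the corresponding $\ket{xy}$-controlled gate for any $x,y\in\{0,1,2\}$.

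For task (ii), the $\ket{22}$-controlled $X_{-1}$ is obtained by applying the $\ket{22}$-controlled $X_{+1}$ from Lemma~\ref{lemma:two-controlled-Xplus1} twice, using $X_{+1}^2 = X_{-1}$. For the remaining two permutations one checks the identities $X_{+1}\,X_{01}\,X_{-1} = X_{12}$ and $X_{-1}\,X_{01}\,X_{+1} = X_{02}$, so conjugating the target qutrit of the $\ket{22}$-controlled $X_{01}$ by $X_{+1}$ or $X_{-1}$ yields the $\ket{22}$-controlled $X_{12}$ or $X_{02}$ respectively. Composing tasks (i) and (ii) produces any desired $\ket{xy}$-controlled permutation gate. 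There is essentially no obstacle beyond verifying the three permutation identities above, since the preceding lemmas already provide the two base constructions for $\ket{22}$-controlled $X_{+1}$ and $\ket{22}$-controlled $X_{01}$.
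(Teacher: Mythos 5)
Your proposal is correct and follows essentially the same route the paper intends: conjugate each control wire by $X_{+1}$ or $X_{-1}$ to move the firing state from $\ket{2}$ to $\ket{x}$, obtain $X_{-1}$ as $X_{+1}^2$, and conjugate the target of the $\ket{22}$-controlled $X_{01}$ by $X_{\pm 1}$ to reach $X_{12}$ and $X_{02}$. The permutation identities you verify are exactly the ones the paper relies on implicitly, and since all conjugating gates are phase-free Clifford permutations, no controlled-global-phase issues arise.
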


\subsection{Permutation gates with any number of controls}

Now let's see how we can generalise these constructions to have any number of controls. Here instead of building the controlled $X_{+1}$ gate out of the controlled $X_{01}$, we will go in the opposite direction.
In order to efficiently build the $\ket{2}^{\otimes k}$-controlled $X_{+1}$ gate, we will adapt a construction for multiple-controlled Toffolis for qubits that requires one \emph{borrowed ancilla}.
\begin{definition}
    A borrowed ancilla is an ancilla that can be in any state, and that is returned to the same state after the operation is finished.
\end{definition}
Having a borrowed ancilla just means that in the circuit we are considering there is at least one other qutrit that is not directly involved with our construction.

We base our construction on the following qubit identity for Toffolis~\cite[Lemma 7.3]{BarencoA1995elementarygates}:
\begin{equation}
    \tikzfig{toffoli-decomp}
\end{equation}
Using this construction, a $k$-controlled Toffoli where $k=2^m$ can be decomposed into 4 Toffoli's with $k/2=2^{m-1}$ controls. Iterating this procedure then requires $O(k^2)$ `standard' Toffolis with $2$ controls each.

\begin{lemma}\label{lem:controlled-Xplus1}
    We can construct the $\ket{2}^{\otimes k}$-controlled $X_{+1}$ gate using $O(k^{2.585})$ Clifford+$T$ gates and using one borrowed ancilla.
\end{lemma}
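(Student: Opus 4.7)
The plan is to generalize the qubit divide-and-conquer identity of Lemma~7.3 of Barenco et al.\ to qutrits, accounting for the fact that $X_{+1}$ has order three rather than two. Partition the $k$ controls into a first half of size $k_1 = \lceil k/2 \rceil$ and a second half of size $k_2 = \lfloor k/2 \rfloor$. Let $C_1$ denote a $\ket{2}^{\otimes k_1}$-controlled $X_{+1}$ applied to the borrowed ancilla $a$ using the first-half controls, and let $C_2$ denote a $\ket{2}^{\otimes(k_2+1)}$-controlled $X_{+1}$ applied to the target $t$ using the second-half controls together with $a$. The proposed circuit is the alternation $C_1, C_2, C_1, C_2, C_1, C_2$ in time order.

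Correctness would be verified by case analysis on whether the first half of controls is all $\ket{2}$ and on the initial value of $a \in \{0,1,2\}$. Since $X_{+1}^3 = \mathbb{I}$, three applications of $C_1$ always return the ancilla to its initial value, so the borrowed-ancilla condition is met. When the first half fires, the ancilla cycles through $\{a, a{+}1, a{+}2\}$ modulo $3$, so exactly one of the three $C_2$ applications sees $a = 2$; this produces a single net $X_{+1}$ on $t$ iff the second half is also all $\ket{2}$. When the first half does not fire, the ancilla is static: if $a = 2$ then $C_2$ fires three times, producing $X_{+1}^3 = \mathbb{I}$ on $t$, and otherwise $C_2$ never fires; in both sub-cases $t$ is unchanged, as required. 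The overall effect is therefore exactly the $\ket{2}^{\otimes k}$-controlled $X_{+1}$, with $a$ restored.

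Each of $C_1$ and $C_2$ is then a $\ket{2}^{\otimes k'}$-controlled $X_{+1}$ with $k' \leq \lceil k/2\rceil + 1$, requiring its own borrowed ancilla. Such an ancilla is always available: for $C_1$ (target $a$, controls the first half) the outer target $t$, any second-half control, or any external spare qutrit serves; for $C_2$ (target $t$, controls second half together with $a$) any first-half control or external qutrit serves. The base case is $k = 2$, handled by Lemma~\ref{lemma:two-controlled-Xplus1}, which needs no ancilla and uses $O(1)$ Clifford+$T$ gates. The resulting recurrence $T(k) \leq 6\, T(\lceil k/2 \rceil + 1) + O(1)$ solves by the master theorem to $T(k) = O(k^{\log_2 6}) = O(k^{2.585})$.

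The main obstacle will be the full case analysis that ties the count of $C_2$ firings modulo $3$ to the indicator of ``all $k$ controls are $\ket{2}$''; the sketch above reduces this to six easy subcases (first half fires or not, times the three ancilla values), but the bookkeeping needs to be explicit enough that one is certain the ancilla is returned and no stray phase or shift is introduced. A minor secondary point is verifying that a borrowed ancilla genuinely exists at every recursion depth, which follows structurally from the fact that the target, ancilla, and two halves of controls are mutually disjoint, but deserves to be stated to rule out a hidden clean-ancilla requirement.
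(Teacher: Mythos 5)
Your proposal is correct and follows essentially the same route as the paper: split the controls into two halves, alternate three applications of a half-controlled $X_{+1}$ on the borrowed ancilla with three applications of an $X_{+1}$ on the target controlled by the other half plus the ancilla, so that the target gate fires exactly once (when the ancilla passes through $\ket{2}$) precisely when all controls are $\ket{2}$, yielding the recurrence with branching factor $6$ and cost $O(k^{\log_2 6})$. Your explicit checks that the ancilla is restored and that a borrowed ancilla exists at every recursion depth are points the paper treats more briefly, but the underlying construction and counting are the same.
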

\begin{proof}
    Consider the following identity:
    \begin{equation}
    \tikzfig{Xplus1-decomp}
    \end{equation}
    To see this is correct first note that it is the identity when any of the top control wires is not in the $\ket{2}$ state (as doing $X_{+1}$ three times is just the identity).  As such, let us assume they are all in the $\ket{2}$ state so that we can ignore these control wires. We are cycling the value of the ancilla three times, which means that the controlled $X_{+1}$ gate on the target qutrit fires exactly once, when the ancilla is put into the $\ket{2}$ state. As we cycle it three times, the ancilla is put into the same state that it started in.

    We can use this to reduce the implementation of a $\ket{2}^{\otimes k}$-controlled $X_{+1}$ gate where $k=2^m$ to a sequence of 6 $\ket{2}^{\otimes k/2}$-controlled $X_{+1}$ gates. Letting $C(m)$ denote the number of $\ket{22}$-controlled $X_{+1}$ gates needed to write the $2^m$-controlled $X_{+1}$ gate, we then get the relation $C(m) = 6 C(m-1)$. As $C(1) = 1$ we calculate $C(m) = 6^{m-1} = \frac16 2^{m \log_2 6}$. Substituting $k=2^m$ we then see we require $O(k^{\log_2 6})$ of the base gate, which we can round up to $O(k^{2.585})$.
\end{proof}

\begin{lemma}\label{lem:many-controlled-X01}
    We can construct the $\ket{2}^{\otimes k}$-controlled $X_{01}$ gate using $O(k^{3.585})$ Clifford+$T$ gates without using any ancillae.
\end{lemma}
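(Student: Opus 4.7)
The plan is to reduce the $\ket{2}^{\otimes k}$-controlled $X_{01}$ gate to the two ingredients already in hand: the ancilla-free $\ket{22}$-controlled $X_{01}$ of the preceding lemma, and Lemma~\ref{lem:controlled-Xplus1}'s $\ket{2}^{\otimes m}$-controlled $X_{+1}$ gate, which costs $O(m^{2.585})$ Clifford+$T$ gates but needs one borrowed ancilla. The crucial observation is that the target $t$ is unchanged by the $X_{+1}$ sub-construction, so $t$ is free to serve as the borrowed ancilla for any $\ket{2}^{\otimes m}$-controlled $X_{+1}$ step we invoke, provided we do not simultaneously attempt to apply an $X_{01}$ to $t$ inside that step. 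Thus no genuinely fresh ancilla is needed.

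I would then adapt the ancilla-free qubit $C^k(X)$ decomposition of Barenco et al.~\cite[Lemma 7.5]{BarencoA1995elementarygates}, which expresses a $C^k(X)$ gate as a short product of smaller multi-controlled NOTs acting on overlapping halves of the controls, with borrowed ancillae drawn from the unused wires. The analogue carries over to qutrits because $X_{01}^2 = \mathbb{I}$, so any even number of $X_{01}$ firings on $t$ cancels, and we only need the \emph{parity} of firings to be $1$ exactly when all $k$ controls are originally in $\ket{2}$. Concretely, I would sandwich a $\ket{2}^{\otimes(k-1)}$-controlled $X_{+1}$ on a chosen intermediate control qutrit (with $t$ as borrowed ancilla) by a pair of $\ket{2}^{\otimes(k-1)}$-controlled $X_{01}$ gates, obtained inductively and ancilla-free. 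This gives the recurrence $T(k) = T(k-1) + O(k^{2.585})$, which unrolls to $T(k) = O(k^{3.585})$, matching the claim, with base case supplied by Corollary~\ref{cor:vvsqtrick}.

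The main obstacle is that, unlike the qubit $X$ (order $2$), the qutrit $X_{+1}$ is a $3$-cycle: a naive one-shot Sleator--Weinfurter-style conjugation produces \emph{spurious} $X_{01}$ firings whenever the intermediate control qutrit originally held $\ket{1}$ (because $X_{+1}$ sends $\ket{1}\mapsto\ket{2}$, and the fire condition is precisely ``intermediate $= \ket{2}$''). To cancel these, my plan is to duplicate the Barenco sandwich with an $X_{-1}$ shift in place of the $X_{+1}$ shift: the two resulting sets of spurious firings (at intermediate values $\ket{1}$ and $\ket{0}$ respectively) then XOR together with a single uncompensated $\ket{22}$-controlled $X_{01}$ to leave parity $1$ exactly at the intermediate value $\ket{2}$, as required. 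Verification then reduces to a routine case analysis over the $3^k$ control configurations, checking that the parity of $X_{01}$ firings on $t$ is $1$ in the unique all-$\ket{2}$ case and $0$ otherwise. The constant factor doubles, but the asymptotic $O(k^{3.585})$ cost is unchanged.
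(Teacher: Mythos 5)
Your high-level plan coincides with the paper's: adapt Barenco et al.'s Lemma~7.5, exploit $X_{01}^2=\mathbb{I}$ to make the residual ``spurious'' firings cancel, and let the untouched target wire serve as the borrowed ancilla for the $\ket{2}^{\otimes m}$-controlled $X_{+1}$ sub-gates of Lemma~\ref{lem:controlled-Xplus1}. The gap is in the concrete circuit you specify, which does not implement that plan. You sandwich one $\ket{2}^{\otimes(k-1)}$-controlled $X_{+1}$ between \emph{two} recursively built $\ket{2}^{\otimes(k-1)}$-controlled $X_{01}$ gates. That immediately contradicts your own recurrence: two recursive calls give $T(k)=2T(k-1)+O(k^{2.585})$, which is exponential in $k$, not $T(k)=T(k-1)+O(k^{2.585})$. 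It also fails to produce the right unitary: if both $X_{01}$ gates are controlled on the same $k-1$ wires they fire together or not at all, so on the target they compose to $X_{01}^2=\mathbb{I}$ and what remains is essentially the $X_{+1}$ shift on the intermediate wire. You have inverted which gates occur twice and which occur once. In the correct decomposition (the paper's Eq.~\eqref{eq:V-V2-many}, the qutrit analogue of Barenco's Lemma~7.5) the gates appearing twice are the $\ket{2}^{\otimes(k-1)}$-controlled $X_{\pm1}$ gates that target the \emph{last control} wire $c_k$, together with two \emph{singly}-controlled $X_{01}$ gates from $c_k$ to the target; the $\ket{2}^{\otimes(k-1)}$-controlled $X_{01}$ recursive call appears exactly once, acting on the first $k-1$ controls and the target. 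Only with that shape do you get one recursive call plus $O(k^{2.585})$ extra work per level, hence $O(k^{3.585})$ in total, with the target free to be borrowed inside the $X_{+1}$ sub-gates exactly as you argue.

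Your third paragraph then repairs a problem that is already solved. The spurious firings caused by $X_{+1}$ being a $3$-cycle are precisely the residual $\ket{2}^{\otimes(k-1)}\ket{1}$-controlled $V^2$ factor of Lemma~\ref{lemma:vvsqtrick}, and they cancel because $V^2=X_{01}^2=\mathbb{I}$ --- the very observation you make in your second paragraph. The additional $X_{-1}$-shifted copy of the sandwich is therefore unnecessary, and the claim that the two families of spurious firings combine with ``a single uncompensated $\ket{22}$-controlled $X_{01}$'' to leave the correct parity is asserted rather than verified. Once the circuit is put in the correct Barenco-7.5 shape with a single recursive call, no such parity repair is needed and the case analysis reduces to the one already done for Lemma~\ref{lemma:vvsqtrick}.
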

\begin{proof}
    Consider the following generalised version of Eq.~\eqref{eq:c2c1vsqc2c2v}, for which the qubit analogue is~\cite[Lemma 7.5]{BarencoA1995elementarygates}:
    \begin{equation}\label{eq:V-V2-many}
        \tikzfig{V-V2-many}
    \end{equation}
    Again taking $V=X_{01}$, we see that we can reduce the construction of the $\ket{2}^{\otimes k}$-controlled $X_{01}$ gate to the construction of the $\ket{2}^{\otimes k-1}$-controlled $X_{01}$ at the cost of introducing two $\ket{2}^{\otimes k-1}$-controlled $X_{+1}$ gates and two $\ket{2}$-controlled $X_{01}$ gates. Iterating the procedure we see then that the full construction requires $2k$ $\ket{2}$-controlled $X_{01}$ gates and $2k$ $\ket{2}^{\otimes m}$-controlled $X_{+1}$ gates where $m\leq k-1$. The cost of this is dominated by the controlled $X_{+1}$ gates, which cost $O(k^{2.585})$ gates each. As we have $O(k)$ of these, we require $O(k^{3.585})$ Clifford+$T$ gates to build the $\ket{2}^{\otimes k}$-controlled $X_{01}$ gate. Note that as the controlled $X_{+1}$ gates are not controlled on all the wires that we have access to at least one borrowed ancilla, so that we can in fact use the construction of Lemma~\ref{lem:controlled-Xplus1}.
\end{proof}

This construction doesn't require any borrowed ancillae. We can now use a generalised version of Eq.~\eqref{eq:two-controlled-Xplus1} to complete the circle and construct a many-controlled version of the $X_{+1}$ gate that does not require any borrowed ancillae (at the cost of worse polynomial scaling).

\begin{lemma}\label{lemma:gentwoxp1}
    We can construct the $\ket{2}^{\otimes k}$-controlled $X_{+1}$ gate using $O(k^{3.585})$ Clifford+$T$ gates without using any ancillae.
\end{lemma}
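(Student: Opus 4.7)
The plan is to lift the two-control construction of Eq.~\eqref{eq:two-controlled-Xplus1} from Lemma~\ref{lemma:two-controlled-Xplus1} to $k$ controls in the most direct way possible: replace each of the $\ket{22}$-controlled $X_{01}$ gates in that circuit by a $\ket{2}^{\otimes k}$-controlled $X_{01}$ gate, while leaving the two uncontrolled $X_{12}$ gates on the target wire untouched. Concretely, I would consider the circuit
\[
 (X_{12})_{\text{target}} \; \cdot \; \bigl(\ket{2}^{\otimes k}\text{-ctrl } X_{01}\bigr) \; \cdot \; (X_{12})_{\text{target}} \; \cdot \; \bigl(\ket{2}^{\otimes k}\text{-ctrl } X_{01}\bigr).
\]

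To verify correctness I would split on the control register. If at least one control is not in the $\ket{2}$ state, both $\ket{2}^{\otimes k}$-controlled $X_{01}$ gates act as the identity and the two $X_{12}$'s cancel (since $X_{12}^2 = \mathbb{I}$), giving the identity on the target as required. If every control is in $\ket{2}$, the induced action on the target is $X_{12} X_{01} X_{12} X_{01}$, which equals $X_{+1}$ by the same algebraic reasoning used in Lemma~\ref{lemma:two-controlled-Xplus1}: we have $X_{12} X_{01} = X_{-1}$, and then $X_{-1}^2 = X_{+1}$.

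For the gate count I would invoke Lemma~\ref{lem:many-controlled-X01} to implement each $\ket{2}^{\otimes k}$-controlled $X_{01}$ in $O(k^{3.585})$ Clifford+$T$ gates without ancillae. The overall circuit uses exactly two such gates plus two single-qutrit Clifford $X_{12}$ gates, so the total cost is still $O(k^{3.585})$. Since Lemma~\ref{lem:many-controlled-X01} itself is ancilla-free, the entire construction inherits this property.

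The hardest part is arguably not the proof itself -- which is a routine case check -- but conceptually noting why this approach costs $O(k^{3.585})$ rather than the $O(k^{2.585})$ of Lemma~\ref{lem:controlled-Xplus1}: there one could use a borrowed ancilla to drive a faster recursion, whereas here we must call the ancilla-free subroutine from Lemma~\ref{lem:many-controlled-X01}, which is the source of the extra factor of $k$.
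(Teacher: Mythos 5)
Your construction is correct and is essentially the paper's own proof: the paper likewise obtains the $k$-controlled $X_{+1}$ by generalising Eq.~\eqref{eq:two-controlled-Xplus1}, interleaving two copies of the ancilla-free $\ket{2}^{\otimes k}$-controlled $X_{01}$ gate of Lemma~\ref{lem:many-controlled-X01} with uncontrolled single-qutrit permutations so that the fired case composes to $X_{+1}$ (via the same $X_{-1}^2 = X_{+1}$ square-root trick) and the unfired case to the identity. Your cost accounting --- two calls to the $O(k^{3.585})$ ancilla-free subroutine, hence $O(k^{3.585})$ overall --- also matches the paper's.
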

\begin{proof}
    Follows from the following circuit that generalises Eq.~\eqref{eq:two-controlled-Xplus1}:
\begin{equation}
    \tikzfig{gentwoxp1}
    \label{eq:gentwoxp1}
\end{equation}
As this requires two copies of $\ket{2}^{\otimes k}$-controlled $X_{01}$, the cost is asymptotically the same as in Lemma~\ref{lem:many-controlled-X01}.
\end{proof}

Note that we can change what the unitary is controlled on by conjugating any number of control qutrits by the appropriate qutrit $X$ gate in $\{X_{+1},X_{-1},X_{01},X_{02},X_{12}\}$. Successive application of tritstring-controlled $X$ gates leads to selective control on all possible combinations of tritstrings. We hence have the following.

\begin{corollary}\label{cor:anymct}
    We can construct any generalisation of a multiple-controlled Toffoli to the qutrit setting in Clifford+$T$ unitarily without ancillae.
\end{corollary}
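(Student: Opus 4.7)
The plan is to reduce everything to the two building blocks already constructed in Lemma~\ref{lem:many-controlled-X01} and Lemma~\ref{lemma:gentwoxp1}: ancilla-free Clifford+$T$ implementations of the $\ket{2}^{\otimes k}$-controlled $X_{01}$ and $X_{+1}$ gates. Any ``generalised multiple-controlled Toffoli'' on qutrits is by definition a permutation of the $3^{k+1}$ computational basis states that, for each fixed value $\vec{c}\in\{0,1,2\}^k$ of the control register, applies some permutation $\pi_{\vec{c}}$ of $\{\ket{0},\ket{1},\ket{2}\}$ to the target. So there are two independent freedoms to absorb: the target permutation $\pi_{\vec{c}}$, and the set of control strings $\vec{c}$ on which the gate should fire.

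First I would handle the target: every element of the symmetric group $S_3$ on $\{\ket{0},\ket{1},\ket{2}\}$ can be written as a product of gates from $\{X_{+1},X_{-1},X_{01},X_{02},X_{12}\}$, and each of these is in turn obtainable from $X_{+1}$ and $X_{01}$ by conjugation with single-qutrit Cliffords (as already noted in the remark following Corollary~\ref{cor:conj22}). Adding $\ket{2}^{\otimes k}$-controls distributes over products, so a single tritstring-controlled permutation of the target is obtained by composing at most a constant number of $\ket{2}^{\otimes k}$-controlled $X_{+1}$ and $X_{01}$ gates, each with single-qutrit Clifford conjugation on the target wire.

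Next I would handle the control pattern: to realise a $\ket{c_1 c_2 \cdots c_k}$-controlled version of any such permutation, I conjugate each control wire $i$ by a single-qutrit Clifford permutation that sends $\ket{c_i}\mapsto \ket{2}$ on the left and its inverse on the right, turning a $\ket{2}^{\otimes k}$-controlled gate into the desired tritstring-controlled gate. Finally, a ``generalised'' multiple-controlled Toffoli in the broadest sense may fire on a set $\mathcal{C}\subseteq\{0,1,2\}^k$ of control strings, possibly with different target permutations per string; because distinct control strings project onto orthogonal subspaces of the control register, the overall unitary is simply the product over $\vec{c}\in\mathcal{C}$ of the individual $\vec{c}$-controlled permutation gates built in the previous step, and these commute pairwise on the control register.

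There is essentially no obstacle beyond bookkeeping; the only thing to watch is that the construction remains ancilla-free. This is immediate: each tritstring-controlled permutation inherits the ancilla-free status of Lemmas~\ref{lem:many-controlled-X01} and~\ref{lemma:gentwoxp1}, and the composition above introduces only single-qutrit Cliffords and sequential composition, which add no ancillae. One could also remark on the gate count: enumerating $\mathcal{C}$ naively gives at most $3^k$ tritstring-controlled blocks, each of cost $O(k^{3.585})$, so the total cost is $O(3^k k^{3.585})$, matching the scaling used later in Section~\ref{sec:permutations}.
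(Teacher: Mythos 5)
Your proposal is correct and follows essentially the same route as the paper, which proves this corollary simply by noting that conjugating control wires by the appropriate single-qutrit $X$ permutation gates changes the control values, and that successive application of tritstring-controlled $X$ gates (which act nontrivially on orthogonal control subspaces) yields selective control on any combination of tritstrings. Your write-up just makes the bookkeeping explicit and adds a gate-count remark; no substantive difference.
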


\subsection{Building many controlled Clifford+\texorpdfstring{$T$}{T} gates}

Now we have all the tools to build the remaining controlled Clifford+$T$ gates: CX, Hadamard, $S$ and $T$.

\begin{lemma}
    The $\ket{2}^{\otimes k}$-controlled CX gate can be constructed unitarily without ancillae using a polynomial number of Clifford+$T$ gates in $k$.
\end{lemma}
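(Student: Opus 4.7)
The plan is to reduce the $\ket{2}^{\otimes k}$-controlled CX gate to two applications of multiple-controlled qutrit permutation $X$ gates, which Corollary~\ref{cor:anymct} already tells us how to construct with polynomially many Clifford+$T$ gates and no ancillae.

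First, I would recall that the qutrit CX gate is $\Lambda(X_{+1})$ in the notation of Definition~\ref{def:controlled-gates}, i.e.~it acts as $\text{CX}\ket{c}\ket{t} = \ket{c}\otimes X_{+1}^c\ket{t}$. So CX itself does nothing when its control qutrit is in $\ket{0}$, applies $X_{+1}$ when the control is in $\ket{1}$, and applies $X_{+1}^2 = X_{-1}$ when the control is in $\ket{2}$. Consequently, the $\ket{2}^{\otimes k}$-controlled CX gate on $k+2$ qutrits has exactly the following action: if the first $k$ control qutrits are all in $\ket{2}$, then based on the value of the $(k+1)$-th qutrit (the internal CX control), either nothing is done to the target, $X_{+1}$ is applied, or $X_{-1}$ is applied; otherwise the gate acts as the identity.

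Next, I would implement this action as a sequence of two gates acting on all $k+2$ qutrits: a $\ket{2^k 1}$-controlled $X_{+1}$ on the last qutrit, followed by a $\ket{2^k 2}$-controlled $X_{-1}$ on the last qutrit. Since the two control patterns $\ket{2^k 1}$ and $\ket{2^k 2}$ differ in the $(k+1)$-th qutrit, at most one of these gates fires on any computational basis state, and case-checking on the $(k+1)$-th qutrit shows the product exactly reproduces the action of the $\ket{2}^{\otimes k}$-controlled CX described above.

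Each of these two gates is a $(k+1)$-qutrit controlled qutrit permutation $X$ gate with a fixed tritstring control, and by Corollary~\ref{cor:anymct} each admits an ancilla-free Clifford+$T$ construction. Using the explicit bounds of Lemmas~\ref{lem:many-controlled-X01} and~\ref{lemma:gentwoxp1} together with the conjugation-by-$X$-permutations idea used to derive Corollary~\ref{cor:anymct}, each such gate takes $O(k^{3.585})$ Clifford+$T$ gates, so the whole construction uses $O(k^{3.585})$ Clifford+$T$ gates, which is certainly polynomial in $k$. The only subtle point to get right is the correspondence between the value of the second CX qutrit and the direction of the $X_{\pm 1}$ applied on the target, which is handled by the case split above; everything else is a direct appeal to previously established results.
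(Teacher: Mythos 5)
Your proof is correct, and it rests on the same basic identity as the paper's, namely that the $\ket{2}^{\otimes k}$-controlled CX factors as a $\ket{2^{k}1}$-controlled $X_{+1}$ times a $\ket{2^{k}2}$-controlled $X_{-1}$ (your case analysis on the $(k+1)$-th qutrit is exactly right, since $X_{+1}^2 = X_{-1}$). Where you diverge is in how that product is realised. The paper does not build the two factors separately: it applies Lemma~\ref{lemma:vvsqtrick} (in its generalised form, Eq.~\eqref{eq:V-V2-many}) with $V = X_{-1}$, and observes that the composite unitary the trick produces --- ``$\ket{2\cdots2}$-controlled $V$ multiplied by $\ket{2\cdots21}$-controlled $V^2$'' --- \emph{is} the controlled CX, since $V^2 = X_{+1}$. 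So the whole gate falls out of a single application of the Sleator--Weinfurter-style circuit, built from singly-controlled $X_{\pm1}$ gates and multiply-controlled permutation gates acting among the control qutrits. Your route instead invokes Corollary~\ref{cor:anymct} twice to synthesise each factor as a full $(k+1)$-controlled permutation gate. Both give an ancilla-free Clifford+$T$ circuit with $O(k^{3.585})$ gates; the paper's version is somewhat more economical because the second factor comes for free as the byproduct of the trick, whereas yours is more modular and makes the correctness argument fully explicit rather than delegating it to the lemma.
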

\begin{proof}
    Just consider the following construction, which can be obtained from applying Lemma~\ref{lemma:vvsqtrick} with $V = X_{-1}$:
    \begin{equation}
        \tikzfig{gentwocx}
    \end{equation}
\end{proof}

For the many-controlled $T$ gate we use the `square-root trick' of Eq.~\eqref{qubitgenmczphase}. Here to find the square root we can use the fact $(T^5)^2 = T$ as $T^9 = \mathbb{I}$, and hence $T^5$ acts like a $\sqrt{T}$ gate. Similarly, $T^4$ is like $\sqrt{T}^\dagger$.
It is then easily verified that we have the following construction for the controlled $T$ gate.
\begin{lemma}
    We can build the $\ket{2}^{\otimes k}$-controlled $T$ gate unitarily without ancillae using a polynomial number of Clifford+$T$ gates in $k$.
\end{lemma}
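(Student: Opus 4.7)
I would prove this by induction on $k$, applying the square-root trick of Eq.~\eqref{qubitgenmczphase} with $\sqrt{T}$ taken to be $T^5$. This is valid because $T^9 = \mathbb{I}$ and $2 \cdot 5 \equiv 1 \pmod{9}$, so $(T^5)^2 = T^{10} = T$; analogously $T^4 = T^{-5}$ serves as $\sqrt{T}^\dagger$. The inductive step then has the same shape as the qubit gadget: sandwich $\ket{2}$-controlled copies of $T^{\pm 5}$ on the target (using the last control) between $\ket{2}^{\otimes k-1}$-controlled permutation gates on the last control, and close with a $\ket{2}^{\otimes k-1}$-controlled $T^d$ on the target for an appropriate exponent $d$ supplied by the inductive hypothesis.

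The delicate point is that the last control is a qutrit with three basis states rather than two, so a verbatim copy of the qubit circuit leaves a spurious action in whichever branch is not moved by the chosen qutrit $X$-permutation. My plan is to work with the inverse pair $X_{+1}$ and $X_{-1} = X_{+1}^\dagger$ (so that the last control is returned to its initial state after the gadget fires), and to combine the resulting gadget with its mirror (using $X_{-1}$ first and $X_{+1}$ second) together with a final $\ket{2}^{\otimes k-1}$-controlled $T^d$ on the target. A case-split on the three states $\ket{0}$, $\ket{1}$, $\ket{2}$ of the last control (and on whether the first $k-1$ controls are all $\ket{2}$) produces a small linear system of congruences in $\mathbb{Z}/9\mathbb{Z}$, and I plan to solve this system (choosing $|d| = 1 \pmod 9$ so that only $O(1)$ copies of the recursive hypothesis are invoked) so that the net action is $T$ exactly when all $k$ controls are $\ket{2}$ and the identity otherwise.

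The cost analysis is then a standard recursion. Each step adds $O(1)$ $\ket{2}^{\otimes k-1}$-controlled $X_{\pm 1}$ gates --- built in $O(k^{2.585})$ by Lemma~\ref{lem:controlled-Xplus1}, since the target qutrit is available as a borrowed ancilla for those subroutines --- together with $O(1)$ $\ket{2}$-controlled $T^{\pm 5}$ gates. The recurrence $C(k) = C(k-1) + O(k^{2.585})$ sums to $C(k) = O(k^{3.585})$, matching the bound claimed in the abstract and in particular polynomial in $k$.

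The main obstacle is the base case $\ket{2}$-controlled $T$, at which the inductive step above becomes circular because the three ``building-block'' $\ket{2}$-controlled $T^{a_i}$ factors are precisely the gate being built. I plan to handle $k=1$ separately by exploiting $T^3 = Z$, combining the $\ket{2}$-controlled $Z$ of Eq.~\eqref{eq:zczp1} with the $\ket{2}$-controlled $S$ construction of Eq.~\eqref{eq:controlled-S-circ} and with conjugation identities such as $X_{+1} T X_{-1} \propto T\,S$ to express $\ket{2}$-controlled $T$ directly as a Clifford+$T$ circuit. The careful bookkeeping of global phases flagged earlier in the paper around Eq.~\eqref{eq:controlled-had-circ} will be essential here, to prevent these phases from reappearing as harmful \emph{controlled} global phases when the base case is plugged into the recursion at higher $k$.
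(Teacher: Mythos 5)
You have the right number-theoretic ingredient ($2\cdot 5\equiv 1 \pmod 9$, so $T^5$ is a square root of $T$ and $T^4$ its inverse), but the inductive gadget you build around it cannot work. In your scheme every phase gate on the target is either $\ket{2}$-controlled on the last control wire (with exponent $a_i$) or $\ket{2}^{\otimes k-1}$-controlled (with exponent $d$), and the last control is shuffled by $X_{\pm 1}$ permutations that are themselves controlled on the first $k-1$ wires. Requiring the gadget to be the identity when the first $k-1$ controls are not all $\ket{2}$ forces $\sum_i a_i \equiv 0 \pmod 9$. On the other hand, when the first $k-1$ controls \emph{are} all $\ket{2}$, each singly-controlled $T^{a_i}$ fires for exactly one of the three initial values of the last control (the permutation applied before it is a bijection of $\{0,1,2\}$), so summing the net $T$-exponent over those three branches gives $\sum_i a_i + 3d \equiv 3d \pmod 9$. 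You need the three branches to contribute $1+0+0=1$, i.e.\ $3d\equiv 1\pmod 9$, which has no solution; adding extra $\ket{0}$- or $\ket{1}$-controlled phases or absorbing controlled global phases does not remove this obstruction. So the ``small linear system of congruences'' you defer to is unsolvable, and the inductive step fails before the base case even becomes relevant.

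The fix --- and the paper's actual proof --- is to put the permutation on the \emph{target} wire rather than on the last control, using the involution $X_{12}$ controlled on all $k$ wires: apply $T^5$ on the target, then the $\ket{2}^{\otimes k}$-controlled $X_{12}$, then $T^4$, then the $\ket{2}^{\otimes k}$-controlled $X_{12}$ again. Since $X_{12}T^4X_{12}=T^5$, the target sees $T^5T^5=T^{10}=T$ when all controls are $\ket{2}$ and $T^4T^5=T^9=\mathbb{I}$ otherwise, exactly and with no stray controlled phase (because $X_{12}$ fixes $\ket{0}$, where $T$ has entry $1$). This is a one-shot reduction, valid for all $k\geq 1$ including $k=1$, to two copies of the $\ket{2}^{\otimes k}$-controlled $X_{12}$ gate, which Lemma~\ref{lem:many-controlled-X01} (after Clifford conjugation on the target) supplies ancilla-free in $O(k^{3.585})$ gates; no induction, no separate base case, and none of the global-phase bookkeeping you anticipate is needed.
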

\begin{proof}
    Consider the following circuit identity:
    \begin{equation}\label{eq:gentwot}
        \tikzfig{gentwot}
    \end{equation}
    Its correctness follows because $X_{12}T^4X_{12} = T^5$.
\end{proof}
That we can implement the $\ket{2}$-controlled $T$ in Clifford+$T$ without ancillae is interesting as this is not possible in the qubit setting. With qubits, to construct the controlled $T$ gate we have to either employ the $\sqrt{T}$ gate, or a clean ancilla~\cite{BarencoA1995elementarygates, KitaevA1996abelianstab, AmyM2013depthopt, KliuchnikovV2013singlequbitcliffordplust, SelingerP2013tdepthone, ChoiK2018ctrlRn}.  Indeed, Kliuchnikov, Maslov, and Mosca showed that for qubit Clifford+$T$ a single ancilla is in fact necessary~\cite{KliuchnikovV2013singlequbitcliffordplust}.

To add controls to the $S$ gate, we extend our construction from Ref.~\cite{GlaudellA2022qutritmetaplecticsubset}:
\begin{lemma}
    \label{lemma:tcsp}
    The $\ket{2}^{\otimes k}$-controlled $S$ gate can be constructed unitarily without ancillae using a polynomial number of Clifford+$T$ gates in $k$.
\end{lemma}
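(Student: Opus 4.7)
The plan is to take the established construction of the $\ket{2}$-controlled $S$ gate from Eq.~\eqref{eq:controlled-S-circ} and lift it to arbitrarily many controls by replacing each of its $\ket{2}$-controlled building blocks with the corresponding $\ket{2}^{\otimes k}$-controlled gate. This is the same ``substitute controls'' strategy that worked for the many-controlled CX and $T$ gates earlier in this section, but with extra care for phases.

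Concretely, I would first inspect Eq.~\eqref{eq:controlled-S-circ} and observe that, up to a constant number of single-qutrit Clifford+$T$ dressings on the target, its only two-qutrit interactions are a constant number of $\ket{2}$-controlled permutation gates (of the form $\ket{2}$-ctrl-$X_{ij}$ or $\ket{2}$-ctrl-$X_{\pm 1}$, or their Hadamard conjugates on the target). By Corollary~\ref{cor:anymct}, each such gate has a $\ket{2}^{\otimes k}$-controlled version implementable in $O(k^{3.585})$ Clifford+$T$ gates with no ancillae. I would then form the \emph{lifted} circuit by replacing every $\ket{2}$-controlled permutation in Eq.~\eqref{eq:controlled-S-circ} by its $\ket{2}^{\otimes k}$-controlled version, leaving the single-qutrit target gates untouched.

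Correctness reduces to a two-case analysis on the $k$ control qutrits. If all $k$ controls are in $\ket{2}$, every lifted multi-controlled permutation fires exactly as its $\ket{2}$-controlled predecessor did when the original control was $\ket{2}$, so the lifted circuit reproduces the action of Eq.~\eqref{eq:controlled-S-circ} and applies $S$ to the target. If at least one control is not in $\ket{2}$, every multi-controlled permutation acts as the identity, and the lifted circuit collapses to the product of single-qutrit target gates appearing in Eq.~\eqref{eq:controlled-S-circ} when its control is $\ket{0}$ or $\ket{1}$; since Eq.~\eqref{eq:controlled-S-circ} is an \emph{exact} realisation of $\ket{2}$-ctrl-$S$ (precisely because the $\zeta^8$ factor has been absorbed into the definition of $S$), this residual product is exactly $\mathbb{I}$ and the target is unchanged. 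A final count of $O(1)$ substitutions times $O(k^{3.585})$ gates each gives the claimed polynomial total, with no ancillae introduced anywhere.

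The main subtlety I anticipate is ensuring that the residual single-qutrit product in the non-firing case composes to $\mathbb{I}$ on the nose rather than to a nontrivial global phase: any leftover phase would become a \emph{controlled} global phase in the lifted circuit and would need to be cancelled by a separate controlled-global-phase construction in the style of Eq.~\eqref{eq:controlledglobalphase}. The exactness of Eq.~\eqref{eq:controlled-S-circ} with our $\zeta^8$-inclusive $S$ is exactly what makes this correction unnecessary, so the lift goes through cleanly and yields an exact, ancilla-free Clifford+$T$ implementation of the $\ket{2}^{\otimes k}$-controlled $S$ gate at cost $O(k^{3.585})$.
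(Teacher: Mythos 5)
Your proposal matches the paper's proof, which likewise constructs the $\ket{2}^{\otimes k}$-controlled $S$ gate by directly generalising Eq.~\eqref{eq:controlled-S-circ}, promoting each $\ket{2}$-controlled component to its $\ket{2}^{\otimes k}$-controlled version. Your added case analysis and the observation that the $\zeta^8$-inclusive definition of $S$ prevents a stray controlled global phase are exactly the (unstated) reasons the paper can call this generalisation ``straightforward.''
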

\begin{proof}
    We straightforwardly generalise Eq.~\eqref{eq:controlled-S-circ}:
\begin{equation}
    \tikzfig{gentwosrel}
\end{equation}
\end{proof}

We will shortly show how to control the $H$ gate. To do this we will need to handle some controlled global phase for which we need the following results.
\begin{lemma}\label{lem:z22phase}
    The $\ket{2}^{\otimes k}$-controlled $Z(2,2) = \text{diag}(1,\omega^2,\omega^2)$ gate can be constructed up to a controlled global phase of $\zeta^2$.  This Clifford+$T$ construction is unitary and ancilla-free, with $T$-count polynomial in $k$.
\end{lemma}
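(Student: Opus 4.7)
The plan is to reduce the lemma to the $\ket{2}^{\otimes k}$-controlled $T$ and $\ket{2}^{\otimes k}$-controlled $S$ constructions already established earlier in this section. First I would write $Z(2,2)$ explicitly as a product of powers of $T$ and $S$, keeping careful track of the global phase. A direct matrix computation gives $T^{6} = \text{diag}(1,\omega^{2},\omega)$ and $S^{7} = \zeta^{56}\text{diag}(1,1,\omega^{7}) = \zeta^{2}\text{diag}(1,1,\omega)$, whose product is $\zeta^{2}\text{diag}(1,\omega^{2},\omega^{2}) = \zeta^{2} Z(2,2)$. Thus
\[
T^{6} S^{7} \;=\; \zeta^{2} \, Z(2,2).
\]

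Next I would build the controlled gate by simply composing the two controlled components: apply the $\ket{2}^{\otimes k}$-controlled $T$ gate six times in succession, followed by the $\ket{2}^{\otimes k}$-controlled $S$ gate applied seven times (equivalently, two applications of the adjoint circuit of Lemma~\ref{lemma:tcsp}). When the control register is not in $\ket{2}^{\otimes k}$, every factor acts as the identity on the target, so nothing happens. When it is in $\ket{2}^{\otimes k}$, the target sees $T^{6}S^{7} = \zeta^{2}Z(2,2)$. This is exactly the $\ket{2}^{\otimes k}$-controlled $Z(2,2)$ composed with a controlled global phase of $\zeta^{2}$ in the sense of Eq.~\eqref{eq:controlledglobalphase}, which is what the lemma claims.

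Because both the $\ket{2}^{\otimes k}$-controlled $T$ and the $\ket{2}^{\otimes k}$-controlled $S$ gates admit unitary, ancilla-free Clifford+$T$ constructions with $T$-count polynomial in $k$ (by the preceding controlled-$T$ lemma and Lemma~\ref{lemma:tcsp}), composing a constant number of them still produces a unitary ancilla-free Clifford+$T$ circuit whose $T$-count is polynomial in $k$. The only subtle part of the argument is the phase bookkeeping above: because our definition of $S$ carries a $\zeta^{8}$ global factor, one must track the accumulated phase precisely to confirm that the residual is exactly $\zeta^{2}$ rather than some other ninth root of unity. This is important rather than cosmetic, since the upcoming controlled-Hadamard construction will need to cancel this residual exactly against another controlled phase arising from Eq.~\eqref{eq:controlled-had-circ}.
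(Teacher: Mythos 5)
Your decomposition is correct and the resulting proof is valid, but it takes a genuinely different route from the paper. The paper proves this lemma with a dedicated circuit (Eq.~\eqref{eq:tcz22phase}) in the style of the square-root trick of Eq.~\eqref{qubitgenmczphase}: uncontrolled Clifford $S$ gates on the target interleaved with $\ket{2}^{\otimes k}$-controlled $X_{02}$ permutation gates, with correctness checked by commuting $S$ past $X_{02}$. You instead observe that $T^{6}S^{7}=\zeta^{2}Z(2,2)$ (your phase arithmetic checks out: $T^{6}=\operatorname{diag}(1,\omega^{2},\omega)$, $S^{7}=\zeta^{2}\operatorname{diag}(1,1,\omega)$, and $S^{9}=T^{9}=\mathbb{I}$ so $S^{7}=(S^{\dagger})^{2}$) and compose the already-established $\ket{2}^{\otimes k}$-controlled $T$ and $S$ gates, both of which precede this lemma in the paper and are exact with no residual controlled phase, so there is no circularity. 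The trade-off: your argument is pure phase bookkeeping with no new circuit identity to verify, but it pays a larger constant factor, since six controlled-$T$ gates and two controlled-$S^{\dagger}$ gates each expand into several controlled permutation gates, whereas the paper's circuit concentrates the non-Clifford cost in a constant (small) number of controlled-$X_{02}$ gates plus cheap single-qutrit Cliffords. Asymptotically both are polynomial in $k$ (indeed $O(k^{3.585})$), so the lemma as stated is satisfied either way; your closing remark that the $\zeta^{2}$ residual must be tracked exactly for the later controlled-Hadamard cancellation is also correct and matches the paper's use of this lemma.
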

\begin{proof}
    Use the following circuit:
    \begin{equation}
        \label{eq:tcz22phase}
        \tikzfig{tcz22phase}
    \end{equation}
    Its correctness can be verified by direct computation, or by commuting $S$ and $X_{02}$.
\end{proof}

With one borrowed ancilla, we can construct the $\ket{2}^{\otimes k}$-controlled $\zeta S=Z(0,1)$ gate:
\begin{lemma}\label{lem:Z01-gate-control}
    The $\ket{2}^{\otimes k}$-controlled $Z(0,1)$ gate can be constructed unitarily with one borrowed ancilla using only Clifford+T gates, with $T$-count polynomial in $k$.
\end{lemma}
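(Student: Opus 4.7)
The plan is to observe that $Z(0,1)=\text{diag}(1,1,\omega)$ is the gate that multiplies the state by $\omega$ exactly when the target qutrit is in state $\ket{2}$. Hence the $\ket{2}^{\otimes k}$-controlled $Z(0,1)$ acts as the identity on every computational basis state except $\ket{2}^{\otimes(k+1)}$, on which it scales by $\omega$. By Eq.~\eqref{eq:controlledglobalphase}, this is precisely a $\ket{2}^{\otimes(k+1)}$-controlled global phase of $\omega$, where we fold the target into the set of controls.

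I would then implement this $(k+1)$-controlled global phase by phase kickback onto the borrowed ancilla, whose initial state I write as $\ket{a}$. The construction is the four-step sequence acting on the ancilla: (i) a $\ket{2}^{\otimes(k+1)}$-controlled $X_{+1}$ gate from Lemma~\ref{lemma:gentwoxp1} (which is ancilla-free and so can be applied even with the outer borrowed ancilla already being used as the target of $X_{+1}$); (ii) a Clifford $Z$; (iii) a $\ket{2}^{\otimes(k+1)}$-controlled $X_{-1}$; (iv) $Z^\dagger$.

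Correctness follows from a case split on whether all $k+1$ controls are in $\ket{2}$. If not, the two controlled-$X_{\pm 1}$ gates are trivial and $Z^\dagger Z=\mathbb{I}$ returns the ancilla to $\ket{a}$ with no overall phase. If so, the ancilla evolves as
\[
\ket{a}\;\mapsto\;\ket{a+1}\;\mapsto\;\omega^{a+1}\ket{a+1}\;\mapsto\;\omega^{a+1}\ket{a}\;\mapsto\;\omega^{a+1-a}\ket{a}=\omega\ket{a},
\]
with arithmetic mod $3$; a direct check of $a\in\{0,1,2\}$ confirms the net exponent is always $1$ mod $3$. Thus the ancilla is returned to $\ket{a}$ and the state picks up exactly a factor of $\omega$ whenever all $k+1$ controls are $\ket{2}$, as required.

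The main conceptual step is spotting that $\ket{2}^{\otimes k}$-controlled $Z(0,1)$ is really a $(k+1)$-controlled global phase; once identified, the kickback construction is essentially forced. The reason a plain Clifford $Z$ suffices (rather than a $T$ or $S$) is that $Z=\text{diag}(1,\omega,\omega^2)$ has consecutive powers of $\omega$ on its diagonal, so the phase ratio between $Z\ket{a+1}$ and $Z\ket{a}$ is the constant $\omega$ independent of $a$; the analogous trick with $T$ or $S$ would leave an $a$-dependent residual phase on the ancilla. The $T$-count is polynomial in $k$, dominated by the two many-controlled $X_{\pm 1}$ gates from Lemma~\ref{lemma:gentwoxp1}.
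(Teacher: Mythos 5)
Your proof is correct and takes essentially the same route as the paper: both view the gate, via Eq.~\eqref{eq:controlledglobalphase}, as a $\ket{2}^{\otimes(k+1)}$-controlled global phase of $\omega$ and kick that phase back onto the borrowed ancilla using controlled Clifford gates whose product when fired is $\omega\mathbb{I}$ and whose product otherwise is $\mathbb{I}$. The only difference is the choice of identity --- you use the Weyl commutation relation $Z^\dagger X_{-1} Z X_{+1} = \omega\mathbb{I}$ with two many-controlled $X_{\pm 1}$ gates from Lemma~\ref{lemma:gentwoxp1}, whereas the paper uses $Z_{-1}X_{02}Z_{-1}X_{02}=\omega\mathbb{I}$ --- and your observation that the uncontrolled $Z,Z^\dagger$ leave no $a$-dependent residual phase is exactly the right point to check.
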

\begin{proof}
    Use the following circuit:
    \begin{equation}
        \label{eq:gentwosborrow}
        \tikzfig{gentwosborrow}
    \end{equation}
    The correctness follows from the fact that $Z_{-1} X_{02} Z_{-1} X_{02} = \omega\mathbb{I}$, and from Eq.~\eqref{eq:controlledglobalphase}.
\end{proof}

This construction of the $\ket{2}^{\otimes k}$-controlled $\zeta S$ gate can be composed with a $\ket{2}^{\otimes k}$-controlled $S^\dagger$ gate, which we constructed in Lemma~\ref{lemma:tcsp}. 
The resulting unitary is then the controlled global phase of $\zeta$ gate;
\begin{corollary}
    We can construct the $\ket{2}^{\otimes k}$-controlled $\zeta \mathbb{I}$ gate unitarily without ancillae in Clifford+$T$. Equivalently, this gate is equal to the $\ket{2}^{\otimes (k-1)}$-controlled $Z(0,1/3)$ gate by Eq.~\eqref{eq:controlledglobalphase}.
\end{corollary}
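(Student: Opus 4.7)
The plan is to compose the two preceding results: apply the $\ket{2}^{\otimes k}$-controlled $Z(0,1) = \zeta S$ construction from Lemma~\ref{lem:Z01-gate-control} followed by the $\ket{2}^{\otimes k}$-controlled $S^\dagger$ construction obtained by inverting Lemma~\ref{lemma:tcsp}, both acting on the same target qutrit. The first check is the single-qutrit identity $Z(0,1) \cdot S^\dagger = \zeta \mathbb{I}$, which is straightforward once one unpacks the global phase convention $S = \zeta^8 \text{diag}(1,1,\omega)$ so that $S^\dagger = \zeta \, \text{diag}(1,1,\omega^{-1})$; multiplying by $Z(0,1) = \text{diag}(1,1,\omega)$ collapses the diagonal to $\zeta \mathbb{I}$. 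The composed controlled circuit therefore implements the $\ket{2}^{\otimes k}$-controlled $\zeta \mathbb{I}$ gate.

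The equivalence with the $\ket{2}^{\otimes (k-1)}$-controlled $Z(0,1/3)$ gate is then immediate from Eq.~\eqref{eq:controlledglobalphase}, which factors the controlled global phase $\zeta = \omega^{1/3}$ onto one of the control wires as a $Z(0,1/3) = \text{diag}(1,1,\zeta)$ phase gate. The $T$-count remains polynomial in $k$ because both component constructions have polynomial $T$-count.

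The main obstacle is reconciling the ``without ancillae'' claim with the borrowed ancilla explicitly required by Lemma~\ref{lem:Z01-gate-control}. The resolution I would give is that in the composed circuit, the target qutrit of the $\zeta S$ and $S^\dagger$ steps is returned to its input state (up to the desired global phase $\zeta$), since $Z(0,1) \cdot S^\dagger = \zeta \mathbb{I}$ acts as identity on the target subspace. By Eq.~\eqref{eq:controlledglobalphase}, this target wire can be absorbed into one of the control wires as a $Z(0,1/3)$, leaving a pure $k$-qutrit gate acting on the controls alone. The target wire, now redundant for the logical action of the gate, can be reinterpreted as the borrowed ancilla of Lemma~\ref{lem:Z01-gate-control}'s circuit, so no additional qutrit is required. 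The principal bookkeeping step is to verify this dual role carefully — checking that the ancilla used inside the $\zeta S$ construction can coincide with the target of the outer $\zeta S$ and $S^\dagger$ applications, which is consistent precisely because both roles demand the qutrit be returned to its original state.
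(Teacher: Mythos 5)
Your core argument is the paper's: compose the $\ket{2}^{\otimes k}$-controlled $Z(0,1)$ of Lemma~\ref{lem:Z01-gate-control} with the $\ket{2}^{\otimes k}$-controlled $S^\dagger$ from Lemma~\ref{lemma:tcsp}, and the phase arithmetic $Z(0,1)\cdot S^\dagger = \text{diag}(1,1,\omega)\cdot \zeta\,\text{diag}(1,1,\omega^2) = \zeta\mathbb{I}$ is exactly right, as is the reduction to a $Z(0,1/3)$ on one fewer control via Eq.~\eqref{eq:controlledglobalphase}. The $T$-count remark is also fine.

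The gap is in your resolution of the borrowed-ancilla issue. You propose that the target qutrit of the $\zeta S$ and $S^\dagger$ steps can \emph{coincide} with the borrowed ancilla used inside the circuit of Eq.~\eqref{eq:gentwosborrow}. This identification is not possible: in that circuit the target of the $Z(0,1)$ functions as an additional control for the four $\ket{2}^{\otimes(k+1)}$-controlled $Z_{-1}$ and $X_{02}$ gates, while the borrowed ancilla is the wire those permutation gates \emph{act on} (this is how $Z_{-1}X_{02}Z_{-1}X_{02}=\omega\mathbb{I}$ produces the controlled phase). A gate cannot be controlled on its own target wire, so the two roles must be played by distinct qutrits and your ``dual role'' collapses the circuit into something ill-defined. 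The composed circuit therefore genuinely spans $k+2$ wires. The reconciliation consistent with the paper's conventions is different: by the remark following Definition~\ref{def:controlledglobalphasegate}, the identity in a controlled global phase gate may act on any number of qutrits, so one simply regards the $\ket{2}^{\otimes k}$-controlled $\zeta\mathbb{I}$ as having its identity act on \emph{both} extra wires (the target of the $S$-type gates and the wire used as the borrowed ancilla); every wire of the circuit is then a wire of the gate being implemented, and all of them are returned to their input states, which is what licenses the ``without ancillae'' phrasing. Equivalently, if you insist on viewing the gate as the $k$-qutrit $\ket{2}^{\otimes(k-1)}$-controlled $Z(0,1/3)$, the honest statement is that the construction uses borrowed (not clean) workspace, which is harmless in every context where the corollary is applied but is not obtained by the wire-merging you describe.
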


\begin{lemma}
    The $\ket{2}^{\otimes k}$-controlled Hadamard gate can be implemented unitarily without ancillae using a polynomial number of Clifford+$T$ gates in $k$.
\end{lemma}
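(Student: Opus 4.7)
The plan is to directly generalise the singly-controlled Hadamard circuit of Eq.~\eqref{eq:controlled-had-circ} in the same spirit as how Lemma~\ref{lemma:tcsp} generalised Eq.~\eqref{eq:controlled-S-circ}. The circuit Eq.~\eqref{eq:controlled-had-circ} is built from $\ket{2}$-controlled $S$ and $S^\dagger$ gates, together with uncontrolled single-qutrit Cliffords applied to the target. To add $k-1$ more controls, I would replace each $\ket{2}$-controlled $S^{\pm 1}$ in the circuit by its $\ket{2}^{\otimes k}$-controlled counterpart from Lemma~\ref{lemma:tcsp}, and leave the uncontrolled Cliffords unchanged. A case analysis on the control qutrits verifies correctness: if any of the $k$ controls is not $\ket{2}$ then every replaced gate acts trivially on the target, so does the whole circuit; if all $k$ controls are $\ket{2}$, the circuit reduces exactly to the original Eq.~\eqref{eq:controlled-had-circ} acting on the target.

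The main obstacle is the global phase of $\zeta$ noted in the discussion after Eq.~\eqref{eq:controlled-had-circ}: the circuit realises the target operation only \emph{up to} a factor of $\zeta$, and under generalisation this becomes a $\ket{2}^{\otimes k}$-controlled global phase of $\zeta$, which is \emph{not} a global phase of the full unitary. This is exactly the reason the controlled global phase gate was set up in the previous results. To correct it, I would compose the generalised circuit with a $\ket{2}^{\otimes k}$-controlled $\zeta^{-1}\mathbb{I} = \zeta^{8}\mathbb{I}$, obtained either as the adjoint of the $\ket{2}^{\otimes k}$-controlled $\zeta\mathbb{I}$ gate from the corollary preceding the lemma, or equivalently by composing that gate with itself eight times.

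Finally, I would tally the gate count. The generalised circuit uses only a constant number of $\ket{2}^{\otimes k}$-controlled $S^{\pm 1}$ gates (each of $T$-count polynomial in $k$ by Lemma~\ref{lemma:tcsp}), a constant number of uncontrolled Clifford gates on the target, and a single $\ket{2}^{\otimes k}$-controlled $\zeta^{-1}\mathbb{I}$ correction (polynomial in $k$ by the corollary, which itself rests on Lemmas~\ref{lemma:tcsp} and~\ref{lem:Z01-gate-control}). Summing, the total $T$-count is polynomial in $k$, and all ingredients are ancilla-free (where the borrowed ancilla of Lemma~\ref{lem:Z01-gate-control} is absorbed in the construction of the controlled $\zeta\mathbb{I}$ gate, as noted in the corollary). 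This gives the required exact, unitary, ancilla-free Clifford+$T$ implementation of the $\ket{2}^{\otimes k}$-controlled Hadamard.
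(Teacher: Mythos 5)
Your high-level strategy---realise the $\ket{2}^{\otimes k}$-controlled $H$ up to a controlled power of $\zeta$ and then cancel that power using the $\ket{2}^{\otimes k}$-controlled $\zeta\mathbb{I}$ gate from the preceding corollary---is viable and uses exactly the machinery the paper develops for this purpose. It is, however, a different route from the paper's: the paper does not generalise Eq.~\eqref{eq:controlled-had-circ} at all, but instead builds a fresh circuit from the Euler decomposition $H = Z(2,2)X(2,2)Z(2,2)$ of Eq.~\eqref{eq:EUrule}, assembling it from the $\ket{2}^{\otimes k}$-controlled $Z(2,2)$ of Lemma~\ref{lem:z22phase} (each instance carrying a controlled phase of $\zeta^2$) together with controlled $Z(0,1)$ gates, arranged so that the accumulated controlled phase is $\zeta^9=1$ and no separate correction step is needed. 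The paper's version buys internal phase cancellation; yours buys reuse of an existing circuit at the cost of an explicit correction gate.

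The genuine gap in your write-up is the phase bookkeeping, and it shows up as an internal inconsistency. You claim (i) that when some control is not $\ket{2}$ ``every replaced gate acts trivially on the target, so does the whole circuit,'' and (ii) that the generalised circuit carries a \emph{controlled} global phase of $\zeta$ needing correction. These cannot both follow from your premises. Eq.~\eqref{eq:controlled-had-circ} necessarily contains uncontrolled single-qutrit gates interleaved with the controlled $S^{\pm1}$ gates, and those still act when the controls fail to fire; since the paper states that the original circuit equals $\zeta$ times the exact $\ket{2}$-controlled $H$ with a \emph{non-controlled} global phase, those residual gates compose to $\zeta\mathbb{I}$ on the non-firing subspace. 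Under your generalisation the circuit would then act as $\zeta\mathbb{I}$ when not all controls are $\ket{2}$ and as $\zeta H$ when they all are---i.e.\ it would already be $\zeta$ times the desired unitary with a harmless global phase, and appending a controlled $\zeta^{-1}\mathbb{I}$ would \emph{introduce} a controlled phase error rather than remove one. If instead the residual $\zeta$ sits in gates on the control wire, your case analysis fails in a different way (those gates fire whenever the first control is $\ket{2}$, even if later controls are not). The construction can be repaired---whatever controlled phase $\zeta^a$ actually results is correctable by $a$ applications of the controlled $\zeta^{-1}\mathbb{I}$ gate---but the proof as written does not establish which correction, if any, is required, because it never pins down where the residual $\zeta$ of Eq.~\eqref{eq:controlled-had-circ} lives.
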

\begin{proof}
    Consider the following circuit:
    \begin{equation}
        \tikzfig{gentwosh}
    \end{equation}
    Each of these gates can be constructed unitarily in Clifford+$T$ without ancillaes and with a polynomial number of gates using Lemmas~\ref{lem:z22phase} and~\ref{lem:Z01-gate-control}.
    To see why it is correct, note that when the gates fire it implements a $\zeta^9 Z(2,2)X(2,2)Z(2,2)$ gate, and as $\zeta^9 = 1$ and $Z(2,2)X(2,2)Z(2,2)=H$ by Eq.~\eqref{eq:EUrule} the construction is indeed correct.
\end{proof}

\subsection{The main theorem}

We have now seen how to exactly construct unitarily and without ancillae the $\ket{2}^{\otimes k}$-controlled CX, $S$, $H$, $T$ and permutation gates. Each of our constructions used at most $O(k^{3.585})$ Clifford+$T$ gates. 

We hence have the following theorem.
\begin{theorem}\label{thm:main}
    Let $U$ be any $n$-qutrit Clifford+$T$ unitary consisting of $N$ CX, $S$, $H$ and $T$ gates. 
    Then there is a $n+k$-qutrit Clifford+$T$ circuit implementing the $\ket{2}^{\otimes k}$-controlled $U$ unitary using $O(N k^{3.585})$ gates.
\end{theorem}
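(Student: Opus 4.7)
The plan is to reduce the construction of $\ket{2}^{\otimes k}$-controlled $U$ to the constructions of $\ket{2}^{\otimes k}$-controlled single- and two-qutrit gates that were proved in the preceding lemmas. First I would write $U = G_N G_{N-1} \cdots G_1$ where each $G_i$ is one of the gates CX, $S$, $H$, $T$ (and their inverses, which are just $S^\dagger = S^2$ and $T^\dagger = T^8$ up to appropriate compositions), tensored with identities on the other qutrits. Since controlling a gate tensored with identity only requires controlling the non-identity factor, every $G_i$ reduces to one of the four base gates acting on one or two qutrits.

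The central observation is the functoriality identity $\text{ctrl}(AB) = \text{ctrl}(A)\cdot\text{ctrl}(B)$: when the control register is in $\ket{2}^{\otimes k}$ both sides apply $AB$ to the target, and when it is not both sides act as the identity on the target. Applied inductively, this gives
\begin{equation}
\ket{2}^{\otimes k}\text{-ctrl}(U) \;=\; \prod_{i=1}^{N} \ket{2}^{\otimes k}\text{-ctrl}(G_i).
\end{equation}
Each factor on the right can be implemented by the appropriate preceding lemma in $O(k^{3.585})$ Clifford+$T$ gates, so summing over the $N$ factors gives the claimed $O(Nk^{3.585})$ bound.

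The main subtlety, and the step that required all the careful setup in the preceding subsections, is that the composition identity only holds when each controlled $G_i$ is implemented \emph{exactly}, not merely up to a global phase. Any leftover global phase on a single factor becomes a genuine controlled phase in the product and corrupts the implementation. This is precisely why the paper fixed specific global phase conventions for $S$ and $H$, proved Lemma~\ref{lem:z22phase} and Lemma~\ref{lem:Z01-gate-control}, and used them as phase-correction gadgets in the exact constructions of the controlled Hadamard and controlled $S$ gates. Once we invoke those exact constructions, no uncontrolled residual phases appear and the inductive composition goes through cleanly, yielding the theorem.
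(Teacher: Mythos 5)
Your proof is correct and is essentially the paper's own (largely implicit) argument: decompose $U$ into its $N$ constituent gates, replace each by its exact $\ket{2}^{\otimes k}$-controlled version from the preceding lemmas, and compose, with the key point being that each controlled factor must be exact rather than correct up to a global phase. One tiny quibble: your parenthetical claim $S^\dagger = S^2$ holds only up to a global phase under the paper's convention $S = \zeta^8\,\mathrm{diag}(1,1,\omega)$, but this is harmless here since the theorem's hypothesis lists only CX, $S$, $H$ and $T$ and not their inverses.
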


As noted in Corollaries~\ref{cor:conj22}~and~\ref{cor:anymct}, by conjugating the control qutrits by certain Clifford operations, we can make the unitary controlled on every possible tritstring in the $Z$- or $X$-basis.
Furthermore, by repeated application of the decomposition on different control values, more specific definitions of ternary control, including that of Def.~\ref{def:controlled-gates} may be realized.

\section{Building trit permutation gates in Clifford+\texorpdfstring{$T$}{T}}\label{sec:permutations}
A \emph{ternary classical reversible} function is a bijective map $f:\{0,1,2\}^n\to \{0,1,2\}^n$.
Ternary classical reversible circuits have been well-studied with a variety of applications.
In contrast with irreversible logic, which necessarily dissipates energy in order to perform computation, reversible logic is of interest for energy efficient and sustainable computing.
Ternary classical reversible functions are of importance to quantum algorithms involving oracles, which are implementations of classical functions as quantum gates.
One reason to expect comparative advantage of qutrits over qubits for this application is that it is impossible to build ancilla-free qubit Clifford+$T$ Toffolis with $3$ or more controls~\cite{KliuchnikovV2013singlequbitcliffordplust}.  Therefore, ancilla-free implementations of classical reversible functions as qubit Clifford+$T$ unitaries are limited to bitstrings of length $n \leq 3$.

However, the decomposition of ternary classical reversible circuits into a fault-tolerant quantum gate set such as Clifford+$T$ has not been explicitly presented before.
Given a ternary classical reversible function $f$ on $n$ trits, our goal will be to find an $n$-qutrit Clifford+$T$ unitary $U$ that implements that function on its $Z$ basis states: $U\ket{x_1,\ldots, x_n} = \ket{f(x_1,\ldots,x_n)}$. Additionally, we want this construction to require the fewest number of $T$ gates.

Let's first note that if we put some previously appeared papers together, that there is in fact a procedure to build arbitrary ternary reversible functions as ancilla-free Clifford+$T$ circuits, although this has not been noted before.
Namely, Fan~\emph{et.~al.}~\cite{FanF2015reversiblequtrit} proved that a gate set consisting of only the single-qutrit gates $X_{01}$, $X_{02}$, and $X_{12}$; and the two-qutrit $\ket{1}$-controlled $X_{+1}$ gate; can implement all the ternary classical reversible circuits without ancillae. They do this by proving that the $\ket{2}^{\otimes (n-1)}$-controlled $X$ gates needed to build reversible functions can be broken down in terms of single-qutrit $X$ gates and the $\ket{1}$-controlled $X_{+1}$ gate (although this is not done in an explicit manner, and actually extracting the concrete circuits to do so is cumbersome).

Separately, Bocharov, Roetteler, and Svore~\cite[Figure 6]{BocharovA2017ternaryshor} constructed the $\ket{1}$-controlled $X_{+1}$ gate as a two-qutrit Clifford+$T$ unitary. Combining these results then yields ancilla-free Clifford+$T$ implementations of any ternary classical reversible function.

Various past works have compared optimisation procedures for specific benchmark ternary classical reversible circuits~\cite{KhanM2004ternarysynthesisgenetic, MandalSBternarysynthesisprojection, BasuSsynthesisternaryreversible, Rani_Kole_Datta_Chakrabarty_2016, KoleA2017synthesisternaryreversible, RaniPM2020grouptheoreticternaryreversible}.
Rather than focus on optimising individual circuits, we will instead focus on providing an asymptotic upper bound of the Clifford+$T$ gate count needed to construct an arbitrary ternary classical reversible circuit. This same upper bound then also caps the asymptotic $T$-count of these previous decompositions.
% Our motivation for doing so is to help efforts to synthesize ternary classical reversible circuits at large, especially if it can improve existing approaches by optimizing one or two key steps.
We hope that our algorithm can then serve as a baseline for circuit complexity of ternary classical reversible circuit synthesis, to which empirical gate counts for specific constructions can be compared.

As was observed in Ref.~\cite{YangG2006algreversible}, we can view a reversible classical function on $n$ trits as just a permutation of the set $\{0,1,2\}^n$. The following classic result shows that any permutation is generated by just the $2$-cycles.
\begin{definition}
   Let $S_k$ be a symmetric group of symbols $\{d_1,d_2,...,d_k\}$, then $(d_{i_1}, d_{i_2},...,d_{i_j})$ is called a `j'-cycle, where $j \leq k, 1 \leq i_1,i_2,...,i_j \leq k$.
\end{definition}
\begin{lemma}\label{lem:2cycles}
    The $2$-cycles generate all permutations.
\end{lemma}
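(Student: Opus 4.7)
The plan is to proceed in two stages: first reduce an arbitrary permutation to a product of cycles, and then reduce each cycle to a product of $2$-cycles.

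For the first stage, I would invoke the standard cycle decomposition: given any permutation $\sigma \in S_k$, define an equivalence relation on $\{d_1,\ldots,d_k\}$ by $d_i \sim d_j$ iff $d_j = \sigma^m(d_i)$ for some integer $m$. The equivalence classes (orbits of $\langle\sigma\rangle$) partition the symbol set, and the restriction of $\sigma$ to each orbit is a cycle of length equal to the orbit size. Since disjoint cycles commute and their product recovers $\sigma$, every permutation is a product of disjoint cycles. This is a well-known fact in elementary group theory, so I would state it briefly rather than prove it in detail.

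For the second stage, I would show that every $j$-cycle $(d_{i_1}, d_{i_2}, \ldots, d_{i_j})$ factors as a product of $j-1$ transpositions via the identity
\begin{equation*}
(d_{i_1}, d_{i_2}, \ldots, d_{i_j}) = (d_{i_1}, d_{i_j})(d_{i_1}, d_{i_{j-1}}) \cdots (d_{i_1}, d_{i_3})(d_{i_1}, d_{i_2}).
\end{equation*}
Verification is a direct computation by tracking where each symbol is sent when the right-hand side is applied from right to left: the symbol $d_{i_1}$ is swapped with $d_{i_2}$ by the rightmost transposition and then moved forward by each subsequent transposition until it lands at $d_{i_j}$; each $d_{i_m}$ for $2 \leq m \leq j-1$ is fixed by the transpositions to its right, swapped to $d_{i_1}$ at position $(d_{i_1},d_{i_m})$, and then moved to $d_{i_{m+1}}$ by the next transposition; and $d_{i_j}$ is sent to $d_{i_1}$ by the leftmost transposition while being fixed by all others. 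Any symbol outside the cycle is fixed throughout.

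Combining the two stages, an arbitrary $\sigma \in S_k$ decomposes as a product of disjoint cycles, and each cycle is itself a product of $2$-cycles, so $\sigma$ is a product of $2$-cycles. The main subtlety is purely bookkeeping --- making sure the order of composition in the cycle-to-transposition identity is consistent with the convention that permutations act on the left --- but there is no genuine obstacle here, as this is a textbook result.
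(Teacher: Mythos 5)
Your proof is correct and takes essentially the same approach as the paper: both decompose the permutation into disjoint cycles and then write each cycle as a product of transpositions anchored at its first element. The only cosmetic difference is that you state the full product $(d_{i_1},d_{i_j})\cdots(d_{i_1},d_{i_2})$ in closed form and verify it directly, whereas the paper obtains the same factorisation by recursively peeling off one transposition at a time.
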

\begin{proof}
    Any permutation can be written as product of some disjoint cycles.  So we only need to show that every cycle $(d_1,d_2,...,d_k)$ can be expressed as a product of some $2$-cycles. When we have a $k$-cycle we can reduce it to a $(k-1)$-cycle:
    \begin{equation}\label{eq:productcycles}
        (d_1,d_2,...,d_k) = (d_1,d_2)(d_1,d_3,...,d_k)
    \end{equation}
    By repeating this equation, each cycle can then be reduced to a $2$ cycle.
\end{proof}
It hence suffices to show how to implement an arbitrary $2$-cycle on the set $\{0,1,2\}^n$.
Such a $2$-cycle is called a \emph{two-level axial reflection} in Ref.~\cite[Def.~1]{BocharovA2017ternaryshor}. That is, a two-level axial reflection is a qutrit operation which permutes two tritstrings, and acts as the identity on all other tritstrings.
\begin{definition}\label{def:twolevelaxial}
    A \emph{two-level axial reflection} is a gate
    \begin{equation}
        \tau_{\ket{j},\ket{k}} = \mathbb{I}^{\otimes n} - \ket{j}\bra{j} - \ket{k}\bra{k} + \ket{j}\bra{k} + \ket{k}\bra{j}
    \end{equation}
    where $\ket{j},\ket{k}$ are two different standard $n$-qudit basis vectors.
\end{definition}

In Ref.~\cite[Section 4]{BocharovA2016topologicalcompilation}, later improved by a constant factor in Ref.~\cite{BocharovA2016optimalitymetaplectic}, they presented a method for \emph{approximate} synthesis of any $n$-qutrit two-level axial reflection in the Clifford+$R$ gate set, where $R = \text{diag}(1,1,-1)$ is the non-Clifford gate for attaining universality, with asymptotic $R$-count $5 \text{log}_3(1/3\epsilon) + O(\text{log}(\text{log}(1/\epsilon)))$, such that $c\ket{0}^{\otimes n}$ approximates $\ket{u}$ to precision $\epsilon / (2\sqrt(2))$.

We show that the same operation, an $n$-qutrit two-level axial reflection, can be \emph{exactly} synthesized as an ancilla-free Clifford+$T$ unitary with a $T$-count that scales asymptotically as $O(n^{3.585})$. This is because we show we can construct it using a single instance of a qutrit Toffoli controlled on $n-1$ wires which requires $O(n^{3.585})$ gates. 

\begin{proposition}
    Let $\vec{a} = (a_1,...,a_n)$ and $\vec{b} = (b_1,...,b_n)$ be any two tritstrings of length $n$.
    Then we can exactly implement the two-level axial reflection on $\vec{a}$ and $\vec{b}$ as an ancilla-free $n$-qutrit Clifford+$T$ unitary with $T$-count~$O(n^{3.585})$.
\end{proposition}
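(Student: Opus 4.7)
The plan is to implement $\tau_{\ket{\vec{a}},\ket{\vec{b}}}$ as a conjugation $P^\dagger C P$, where $C$ denotes the $\ket{2}^{\otimes (n-1)}$-controlled $X_{01}$ gate on a fixed target qutrit $i_1$, and $P$ is an ancilla-free Clifford+$T$ unitary that permutes the $Z$-basis and satisfies $P\ket{\vec{a}}=\ket{2\cdots 2\,1\,2\cdots 2}$ and $P\ket{\vec{b}}=\ket{2\cdots 2\,0\,2\cdots 2}$, where in both cases the non-$2$ entry sits at position $i_1$. Since $C$ swaps exactly these two basis vectors and fixes all other basis states, and since $P$ is a bijection on the $Z$-basis, $P^\dagger C P$ acts as a transposition of $\ket{\vec{a}}$ and $\ket{\vec{b}}$ and as the identity on every other basis vector, which is precisely the two-level axial reflection we want. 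Note that $P$ may permute the other $3^n-2$ basis states arbitrarily; this freedom is what makes a cheap construction possible.

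To build $P$, I would first pick any position $i_1$ where $a_{i_1}\neq b_{i_1}$. Then at each position $i$ I apply a single-qutrit Clifford permutation chosen from $\{X_{01},X_{02},X_{12},X_{+1},X_{-1},\mathbb{I}\}$ so that $a_i\mapsto 2$ for every $i\neq i_1$, while at $i_1$ we have $a_{i_1}\mapsto 1$ and $b_{i_1}\mapsto 0$; the two requirements at $i_1$ pin down a unique permutation of $\{0,1,2\}$ because $a_{i_1}\neq b_{i_1}$ (a six-case check covers all pairs). After this relabelling, $\vec{a}$ has been sent to $\ket{2\cdots 2\,1\,2\cdots 2}$, while $\vec{b}$ is sent to some string $\vec{b}'$ with $b'_{i_1}=0$, $b'_i=2$ at every position where $a_i=b_i$, and $b'_i\in\{0,1\}$ at the remaining disagreement positions $i_2,\ldots,i_m$.

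Next I collapse these remaining differences: for each $i_j$ with $j\geq 2$, apply a $\ket{0}$-controlled $X_{b'_{i_j},2}$ gate with control on qutrit $i_1$ and target on qutrit $i_j$. Each such singly-controlled permutation is realisable in $O(1)$ Clifford+$T$ gates by conjugating Eqs.~\eqref{eq:zczp1} and~\eqref{eq:tcd01} with suitable single-qutrit Cliffords. A direct check shows these gates leave the image of $\vec{a}$ untouched, since its $i_1$-th coordinate is $1$ and the control never fires, while on the image of $\vec{b}$ the control always fires (the $i_1$-th coordinate remains $0$ throughout, as every target sits on a different qutrit), so each gate rewrites one non-$2$ coordinate of $\vec{b}'$ into a $2$. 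After all $m-1$ such gates, $\vec{b}$ has been sent to $\ket{2\cdots 2\,0\,2\cdots 2}$ and $P$ has the desired action.

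Tallying: $P$ and $P^\dagger$ together cost $O(n)$ Clifford+$T$ gates, while the central $C$ costs $O(n^{3.585})$ by Lemma~\ref{lem:many-controlled-X01}, giving a total $T$-count of $O(n^{3.585})$ with no ancillae used at any stage. The main thing requiring care is the bookkeeping for $P$: one must pick the six-case single-qutrit permutation at $i_1$ so that both $\vec{a}$ and $\vec{b}$ map correctly, and then verify that the consolidating controlled permutations in Step B act as advertised on the running images of $\vec{a}$ and $\vec{b}$. Both verifications are short finite case analyses, so I expect no substantive obstacle beyond this routine check.
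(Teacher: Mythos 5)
Your proof is correct and takes essentially the same approach as the paper: a single $(n-1)$-controlled single-qutrit permutation gate (costing $O(n^{3.585})$ by Lemma~\ref{lem:many-controlled-X01}) conjugated by $O(n)$ cheap single-qutrit and singly-controlled permutation gates that carry $\{\ket{\vec{a}},\ket{\vec{b}}\}$ onto the pair of basis states the central gate swaps. The paper skips your single-qutrit normalisation step and instead controls its conjugating $X_{a_j,b_j}$ gates directly on the value of the distinguished qutrit, verifying correctness by a case analysis on which steps fire, but the decomposition, the key lemma, and the cost accounting are the same.
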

\begin{proof}
    We assume $\vec{a} \neq \vec{b}$, or the 2-cycle would just be the identity operation on all inputs.
    As $\vec{a}$ and $\vec{b}$ differ, they must differ by at least one character.  Without loss of generality suppose that $a_n \neq b_n$.
    Consider the following circuit:
    \begin{equation}\label{eq:permutetwotritstrings}
        \tikzfig{permutetwotritstrings}
    \end{equation}
    Here the circles denote controls on the value of an $a_j$ or $b_j$, which control whether a $X_{a_j,b_j}$ operation is applied (which is identity if $a_j = b_j$). 
    Hence, the gate in Step~2 is a many-controlled $X_{a_n,b_n}$ gate, which we know how to build by Lemma~\ref{lem:many-controlled-X01} using $O(n^{3.585})$ gates.
    We conjugate this gate, in Steps~1~and~3, by $n-1$ gates that are each Clifford equivalent to the $\ket{2}$-controlled $X_{12}$ gate, and hence require $O(n)$ gates to implement. 
    
    This circuit indeed implements the $(\vec{a},\vec{b})$ 2-cycle, which we can see through enumerating the possible input cases.
    \begin{itemize}
        \item When the input is $\vec{a}$: Only steps~2~and~3 fire (as $b_n\neq a_n$), outputting $\vec{b}$.
        \item When the input is $\vec{b}$: Steps~1~and~2 fire, outputting $\vec{a}$.
    \end{itemize}
    Observe that when Step~2 does not fire, Steps~1~and~3 always combine to the identity gate.
    Therefore, we only need to consider the remaining cases where Step~2 does fire.
    \begin{itemize}
        \item When both Steps~1~and~2 fired: The input had to have been $\vec{b}$.
        \item When Step~2 fired, but Step~1 didn't fire: Either the input was $\vec{a}$, or the last input character was neither $a_n$ nor $b_n$ in which case the overall operation is the identity.
    \end{itemize}
    Therefore, the circuit in Eq.~\eqref{eq:permutetwotritstrings} maps $\vec{a}$ to $\vec{b}$, $\vec{b}$ to $\vec{a}$, and is identity on all other tritstrings.
    Lastly, the $T$-count is asymptotically that of the gate in Step~2, which is $O(n^{3.585})$.
\end{proof}

We note that this decomposition can readily be generalized to any qudit dimension provided we can construct single-qudit $X$ permutation gates controlled on an arbitrary dit string. 
\begin{theorem}\label{thm:classical-synthesis}
    For any ternary classical reversible function $f:\{0,1,2\}^n\to \{0,1,2\}^n$ on $n$ trits, we can construct an $n$-qutrit ancilla-free Clifford+$T$ circuit which exactly implements it, with $T$-count $O(3^n n^{3.585})$.
\end{theorem}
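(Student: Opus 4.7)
The plan is to reduce the synthesis of an arbitrary reversible $f$ to the synthesis of a bounded number of two-level axial reflections, and then invoke the previous proposition. The key tools are already in place: by viewing $f$ as an element of the symmetric group on $\{0,1,2\}^n$, Lemma~\ref{lem:2cycles} lets us write $f$ as a product of transpositions, and the preceding proposition gives an ancilla-free Clifford+$T$ implementation of each such transposition with $T$-count $O(n^{3.585})$.

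First I would write $f$ as a product of disjoint cycles on the $3^n$ tritstrings, then expand each cycle using Eq.~\eqref{eq:productcycles} to obtain a product of 2-cycles. A standard counting argument shows the total number of transpositions needed is at most $3^n - 1$: if the non-trivial disjoint cycles have lengths $k_1,\ldots,k_c$ with $k_1 + \cdots + k_c \leq 3^n$, then iterating Eq.~\eqref{eq:productcycles} writes the $i$-th cycle as $k_i - 1$ transpositions, giving $\sum_i(k_i - 1) \leq 3^n - 1$ in total.

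Next, for each transposition $\tau_{\ket{\vec{a}},\ket{\vec{b}}}$ in this decomposition, I would invoke the previous proposition to produce an $n$-qutrit Clifford+$T$ circuit that exactly implements it, using no ancillae and $O(n^{3.585})$ gates. Concatenating these circuits in order yields a single $n$-qutrit Clifford+$T$ unitary $U$ with $U\ket{x_1,\ldots,x_n} = \ket{f(x_1,\ldots,x_n)}$ for all $Z$-basis inputs. Since ancilla-free composition adds $T$-counts, the total is bounded by $(3^n - 1) \cdot O(n^{3.585}) = O(3^n n^{3.585})$.

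There is no real obstacle here beyond being careful about the counting: every step is either an appeal to a classical fact about $S_{3^n}$ or to a result proved earlier in this section. The only subtlety to flag is that the circuit produced is deterministic and exact on all $3^n$ computational basis states (not just up to global phase or on a subspace), which follows because each transposition circuit from the preceding proposition acts as the identity on all tritstrings outside $\{\vec{a},\vec{b}\}$, so composing them correctly realises the full permutation.
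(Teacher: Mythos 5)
Your proof is correct and follows essentially the same route as the paper: view $f$ as a permutation of $\{0,1,2\}^n$, decompose it into at most $3^n-1$ transpositions via Lemma~\ref{lem:2cycles}, and implement each with the preceding proposition at cost $O(n^{3.585})$. The extra detail you supply on the $\sum_i(k_i-1)\leq 3^n-1$ count and on exactness of the composed circuit is consistent with, and slightly more explicit than, the paper's argument.
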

\begin{proof}
    We view $f$ as a permutation of size $3^n$. This permutation consists of cycles, each of which can be decomposed into 2-cycles using Lemma~\ref{lem:2cycles}. This full decomposition requires at most $3^n-1$ 2-cycles. Implementing each of these 2-cycles requires $O(n^{3.585})$ $T$ gates.  Therefore, the asymptotic $T$-count of the overall construction is $O(3^n n^{3.585})$.
\end{proof}
At first glance the number of gates needed here may seem expensive, but as noted in Ref.~\cite{YangG2006algreversible}, the asymptotic scaling we find is still ``exponentially lower than the complexity of a breadth-first-search synthesis algorithm''.
Additionally, with a simple counting argument we can show our result is within a polynomial factor of the optimal number.
\begin{proposition}
    There exist ternary classical reversible functions $f:\{0,1,2\}^n\to \{0,1,2\}^n$ that require at least $O(n3^n/\log n)$ Clifford+$T$ gates to construct.
\end{proposition}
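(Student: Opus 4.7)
The plan is a standard counting (information-theoretic) argument. First, I would count the number of distinct targets: the ternary reversible functions on $n$ trits are exactly the permutations of $\{0,1,2\}^n$, so there are $(3^n)!$ of them. By Stirling's approximation, $\log_2((3^n)!) = \Theta(n \cdot 3^n)$, i.e.\ specifying an arbitrary reversible $f$ requires on the order of $n 3^n$ bits of information.

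Next, I would upper-bound the number of $n$-qutrit unitaries expressible by a Clifford+$T$ circuit of length at most $g$. Fixing a finite generating set (e.g.~$H$, $S$, $T$, CX together with their inverses), each gate in a circuit is specified by its type (constantly many choices) together with the one or two wires on which it acts (at most $n(n-1)$ ordered pairs). Hence the total number of distinct circuits of length at most $g$, and therefore an upper bound on the number of unitaries realisable by such circuits, is $(Cn^2)^g$ for some constant $C$ depending only on the chosen generating set.

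Comparing the two estimates, if every ternary reversible $f$ on $n$ trits admitted a Clifford+$T$ implementation of length at most $g$, then the $(3^n)!$ distinct permutations would all have to appear among the at most $(Cn^2)^g$ realisable unitaries. Taking logarithms, this forces $g\,(\log C + 2\log n) \geq \log((3^n)!) = \Theta(n \cdot 3^n)$, and rearranging yields $g = \Omega(n 3^n / \log n)$. Consequently there exists at least one such $f$ (in fact all but a vanishing fraction) requiring this many gates, which is the claim.

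The only step requiring care is fixing the counting model for Clifford+$T$ circuits precisely enough that each gate contributes a factor of at most $O(n^2)$ to the count; once a finite generating set is specified this is immediate. Since I only need an upper bound on the image of the circuit-to-unitary map, overcounting (many distinct circuits producing the same unitary) only strengthens the lower bound. The absence of ancillae in the statement means the Hilbert space stays fixed at $n$ qutrits throughout, so the bound $(Cn^2)^g$ applies cleanly without having to manage a growing register. Apart from these bookkeeping points, the whole argument reduces to Stirling versus the exponential-in-$g$ circuit count.
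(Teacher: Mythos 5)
Your proposal is correct and follows essentially the same route as the paper's own proof: a counting argument comparing the $(Cn^2)^g$ circuits of length at most $g$ against the $(3^n)!$ reversible functions, with a lower bound on $\log((3^n)!)$ of order $n\,3^n$ (the paper uses $\log(k!)\geq \tfrac12 k\log k$ where you invoke Stirling, which is the same estimate). No gaps; the bookkeeping points you flag (overcounting only helps, fixed register size) are exactly the ones the paper implicitly relies on.
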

\begin{proof}
    We consider a gate set of CX, $S$, $T$, and $H$ gates. Taking into account qutrit positioning there are then $n(n-1) + 3n$ different gates, so that using $N$ of these gates, we can construct at most $(n(n-1)+3n)^N \leq (2n^2)^N = 2^Nn^{2N}$ different circuits. There are exactly $(3^n)!$ different classical reversible trit functions on $n$ trits (where $k!$ denotes the factorial of $k$). In order to write down every such permutation we must hence have a number of gates $N$ such that at least $2^Nn^{2N} \geq (3^n)!$. Taking the logarithm on both sides and using $\log(k!) \geq \frac12 k\log k$ we can rewrite this inequality to $N\log 2 + 2N\log n \geq \frac12 3^n\cdot n\log 3$. Factoring out $N$ gives $N\geq \frac{\log 3}{2} \frac{n3^n}{\log 2 + 2\log n} \geq \frac{\log 3}{6} \frac{n3^n}{\log n}$ showing that we must have $N=O(n3^n/\log n)$.
\end{proof}
We believe it might be possible to improve the implementation of the $X_{01}$ gate with $n$ controls to require just $O(n)$ gates, in which case the construction of Theorem~\ref{thm:classical-synthesis} would require $O(n 3^n)$ gates, making it optimal to within a logarithmic factor.

Although our construction resembles that for qubits in Ref.~\cite{YangG2006algreversible}, their construction uses $O(n 2^n)$ multiple-controlled $X$ gates (each with $n-1$ controls).  In contrast, our construction requires $O(3^n)$ of this gate's qutrit equivalent.  This factor of $n$ improvement in asymptotic circuit complexity can be applied to the qubit setting of Ref.~\cite{YangG2006algreversible} as well, resulting in a more efficient construction.
Additionally, our observation that only a single ternary $(n-1)$-controlled Toffoli is needed to implement the two-level axial reflection can be used to improve the algorithm of Fan~\emph{et.~al.}~\cite{FanF2015reversiblequtrit} as well.

\section{Conclusion}\label{sec:conclusion}

We have shown how to construct any many-controlled qutrit Clifford+$T$ unitary, using just Clifford+$T$ gates and without using ancillae. Our construction uses $O(k^{3.585})$ gates in the number of controls $k$.
Using our results we have shown how any classical permutation on $n$ trits can be realised as an $n$-qutrit ancilla-free Clifford+$T$ unitary circuit with $O(3^n n^{3.585})$ gates.

We suspect that the $O(k^{3.585})$ scaling is not optimal. In future work we would like to find better ways to decompose the many controlled $X_{+1}$ gate into lower-controlled gates, using the fact that after the first iteration of the decomposition we have many borrowed ancillae available, which would possibly be used to lead to better asymptotic scaling.  In particular, we would like to see whether the linear $T$-count construction of the qubit $n$-controlled Toffoli construction where $n-2$ borrowed ancilla are available from Ref.~\cite[Lemma~7.2]{BarencoA1995elementarygates} can be adapted to qutrits. 
Improvements in this scaling will directly lead to improvements in the Clifford+$T$ synthesis of reversible trit functions of Theorem~\ref{thm:classical-synthesis} and will bring it closer to the theoretical optimum. It would also directly improve the decompositions using the techniques of for instance Refs.~\cite{YangG2006algreversible, KoleD2010reversibledfsbfs, KamalikaD2012grouptheoryreversible, RaniPM2020grouptheoreticternaryreversible}.
It would also be interesting to find lower bounds on the $T$-count of our constructions using techniques extended from the qubit setting~\cite{SongG2004optsimptoffoli, ShendeVopttoffolicx, GiuliaM2020enumoptclassify, GossetD2014opttoffolit, MaslovD2016optasymptoticNCTreversible, HowardM2017resourcetheorymagic, BeverlandM2020lowerboundnonclifford, MoscaM2021polytimespacetcount}.

Our results pave the way to a full characterisation of the unitaries that can be constructed over the qutrit Clifford+$T$ gate set.
We conjecture that, as in the qubit case~\cite{GilesB2013multiqubitcliffordplustsynthesis}, any qutrit unitary with entries in the number ring generated by the Clifford+$T$ gate set can be exactly synthesised over Clifford+$T$.

Finally, we aim to use our results to emulate qubit logic circuits on qutrits. Work in this area has already shown to lead to several benefits~\cite{GokhaleP2019asymptotic}, so it will be interesting to to identify where more asymptotic improvements for qubit computation in the fault-tolerant regime can be made.

\paragraph{Acknowledgments} The authors wish to thank Andrew Glaudell and Neil J.~Ross for discussions regarding the consequences of our results and Andrew Glaudell specifically for pointing out Eq.~\eqref{eq:gentwot}.  We additionally wish to thank Shuxiang Cao and Razin Shaikh for assistance in preparing the figures in an early draft of this paper. JvdW is supported by a NWO Rubicon personal fellowship. LY is supported by an Oxford - Basil Reeve Graduate Scholarship at Oriel College with the Clarendon Fund.

\bibliographystyle{splncs04}
\bibliography{main}

\end{document}